\let\csname equation*\endcsname\relax
\let\csname endequation*\endcsname\relax
\definecolor{nicepurple}{HTML}{97a0cf}
\definecolor{lightgreen}{HTML}{519E8A}
\definecolor{lightergreen}{HTML}{A7E8BD}
\definecolor{melon}{HTML}{FCBCB8}
\definecolor{puce}{HTML}{C08497}
\newcommand{\one}{\mathbbm{1}}
\renewcommand{\newline}{\\\\\noindent}
\newcommand{\hilb}{\mathscr{H}}
\newcommand{\quotes}[1]{``#1"}
\newcommand{\lie}[1]{\mathfrak{#1}}
\newcommand{\extd}{\textrm{d}}
\newcommand{\Uone}{\textrm{U}(1)}
\newcommand{\R}{\mathbb{R}}
\newcommand{\Z}{\mathbb{Z}}
\DeclareMathOperator{\sign}{sign}
\DeclareMathOperator{\re}{Re}
\DeclareMathOperator{\im}{Im}
\newcommand{\Uqone}{\textrm{U}_q(1)}
\newcommand{\scpr}[2]{\langle#1\, \vert \, #2 \rangle}
\newcommand{\jmax}{m_{\text{max}}}
\newcommand{\expect}[1]{{\langle #1\rangle}}
\newtheorem*{lemmaApp}{Lemma C.1}
\begin{document}

\title[]{Towards quantum gravity with neural networks: Solving the quantum Hamilton constraint of U(1) BF theory}

\author{Hanno Sahlmann $^1$, Waleed Sherif $^2$\footnote{Author to whom any correspondence should be addressed}}

\address{Institute for Quantum Gravity, Department of Physics, Friedrich-Alexander-Universit\"{a}t Erlangen-N\"{u}rnberg (FAU), Staudtstraße 7, 91058 Erlangen, Germany}
\ead{$^1$ hanno.sahlmann@fau.de, $^2$ waleed.sherif@gravity.fau.de}
\vspace{10pt}
\begin{indented}
\item[] \today 
\end{indented}

\begin{abstract}
In the canonical approach of loop quantum gravity, arguably the most important outstanding problem is finding and interpreting solutions to the Hamiltonian constraint. In this work, we demonstrate that methods of machine learning are in principle applicable to this problem. We consider U(1) BF theory in 3 dimensions, quantized with loop quantum gravity methods. In particular, we formulate a master constraint corresponding to Hamilton and Gau{\ss} constraints using loop quantum gravity methods. To make the problem amenable for numerical simulation we fix a graph and introduce a cutoff on the kinematical degrees of freedom, effectively considering $\Uqone$ BF theory at a root of unity. We show that the Neural Network Quantum State (NNQS) ansatz can be used to numerically solve the constraints efficiently and accurately. We compute expectation values and fluctuations of certain observables and compare them with exact results or exact numerical methods where possible. We also study the dependence on the cutoff. 
\end{abstract}

%
%
%
%
%

\section{Introduction}
Gravity as described by Einstein's theory of general relativity (GR) is the given as the interplay between the geometry of a 4-dimensional spacetime and the matter within \cite{Wald:1984rg}. Finding a quantum theory for gravity is one of the big outstanding problems of theoretical physics. The problem is difficult because of the deep conceptual questions that have to be answered when space-time geometry becomes quantum. It is also difficult because the dynamics of gravity is difficult to solve. This is true already in the classical theory and indeed much more so in all the approaches to quantization. Hence research has often focused on reduced or approximate models \cite{Bondi:1947fta,Weinberg:1972kfs,Carroll:2004st}. Another type of simplification is reducing the number of spatial dimensions in the theory by considering 2+1-dimensional gravity \cite{Carlip:1995zj,Carlip:1995qv,Carlip:1998uc}. 
\newline
Loop quantum gravity (LQG) \cite{Rovelli:1997yv,Thiemann:2001gmi,Thiemann:2007pyv,Ashtekar:2004eh} is based on a gauge theoretical formulation of Einstein's theory. Quantized excitations of the gravitational field are one-dimensional (canonical LQG) or two-dimensional (spinfoam-LQG) and carry group representations as labels. In the canonical approach to LQG, Einstein's equations take the form of constraint equations. Arguably the most difficult task is solving --  and interpreting the solutions of -- the constraints, in particular for the Hamilton constraint \cite{Thiemann:1996aw,Thiemann:1996av,Reisenberger:1996pu,Varadarajan:2022dgg}. Therefore, this problem is often investigated in reduced and approximate models, in particular in cosmological scenarios \cite{Bojowald:2001xe,Ashtekar:2006wn}.
\newline
In the present work, we aim to establish an approach to numerical canonical quantum gravity using modern numerical methods which harness the power of machine learning techniques. The ultimate goal is to find approximate solutions to the Hamilton constraint of LQG and to investigate their properties. This is complementary to other numerical work in LQG, see for example  \cite{Bahr:2016hwc,Han:2020npv,Cunningham:2020uco,Dona:2022yyn}, in that it addresses the canonical theory, and in that it makes use of different techniques. In contrast to numerical work on loop quantum cosmology, it does not make use of classical symmetry reduction.  
\newline
Since machine learning techniques have not been applied to LQG as far as we know, and the full theory with its SU(2) gauge symmetry is full of technical difficulties, in the present work we will consider a simpler system, with a similar kinematical structure and a dynamics that is fully understood. The simplest starting point to LQG would be gravity in 2+1-dimensions which can be understood as a BF-theory \cite{Oda:1990rw}. Such a theory has simple expression for its constraints. A BF-theory is defined by an action of the type \cite{Celada:2016jdt}
\begin{equation}
\label{eq:BF_action}
    S[\omega, B] = \int_{M^{(n)}}\tr(B \wedge F(\omega)),
\end{equation}
for an $n$-dimensional manifold $M^{(n)}$ and a fixed compact Lie group $G$ equipped with a non-degenerate invariant bilinear form and a corresponding Lie algebra $\lie{g}$. Here, $B$ is a $\lie{g}$-valued $n-2$ form and $\omega$ is a $\lie{g}$-valued connection with a curvature 2-form $F(\omega) = \extd \omega + \omega \wedge \omega$ of the connection $\omega$. The stationary points of the action fulfill the equations
\begin{equation}
     \label{eq:bfConstraints}
     F(\omega) = 0 \quad , \quad \extd^{(\omega)}B = 0.
\end{equation}
The solutions of these constraints are therefore flat connections $\omega$ and covariantly constant fields $B$. The theory has no local degrees of freedom, making it only non-trivial in the presence of boundaries of and/or non-trivial cycles in $M^{(n)}$. 
\newline
For 3+1 gravity in the weak coupling limit, one obtains a gauge group of $\Uone^3$ \cite{Smolin:1992wj,Bakhoda:2022rut}. Motivated by this, we consider the simplest BF-theory where $M^{(3)} = \R \times M^{(2)}$ with $M^{(2)} = \R^2$ and $G = \Uone$. 
\newline
Clearly, to treat a system with infinitely many degrees of freedom on a computer, some form of truncation is necessary. In the present work, we truncate U(1) BF-theory in two ways: We restrict consideration to the degrees of freedom associated with a single graph $\gamma$ in $M^{(3)}$, a truncation similar to that used in lattice gauge theory, which has also been advocated in LQG in a different context \cite{Giesel:2006uj,Giesel:2006uk}. We will also truncate the functionals of $\omega$ by restricting to a finite list of irreducible U(1) representations. Viewed from another angle, we are actually working in a kinematical setup for $\Uqone$-BF theory with $q$ a root of unity. 
\newline
With these radical truncations, the kinematical Hilbert space for the theory becomes finite dimensional, yet huge practical problems remain. First and foremost is the exponential growth of the Hilbert space dimension with the size of the graph and the cutoff on the representations. This growth makes it impossible to naively deal with state vectors for any but the smallest graphs and lowest cutoffs, as no computer system has enough memory to do so. In the present work, we deal with this problem by using neural network quantum states (NNQS) \cite{Carleo:2017nvk} as an ansatz for states solving the quantum constraints. In a prosaic way, these NNQS can thus be seen as a parametrization of the physical quantum states with a number of parameters that is much easier to handle. In a more poetic way, they could be described as an artificial brain that is optimized to reproduce physical states.  
\newline
The use of NNQS may be compared to using tensor network (TN) methods for approximating ground states in the sense that both ansätze parameterise the quantum state in such a way that the coefficients can be reproduced by a fewer number of parameters. In tensor networks, the network topology is carefully chosen to reproduce a desired symmetry and  entanglement structure of the state. It is fixed, and typically quite local. In NNQS on the other hand, the network topology is quite non-local and effectively variable. The optimization will determine which connections will carry a non-zero weight. Also, a tensor network state is polynomial in the tensor coefficients, whereas the weights of the NNQS determine the state in variable non-linear ways.

\subsection{Quantum $\Uone$ and $\Uqone$ BF-theory}
The classical theory \eqref{eq:BF_action} can be quantised using LQG methods \cite{Baez:1999sr,Dittrich:2014wpa,Drobinski:2017kfm}, see also \cite{Corichi:1997us,Sahlmann:2002xv}. In LQG, one applies the Dirac quantisation algorithm to the GR written in terms of the Ashtekar-Barbero connection $A^I_a$ and the densitised triads $E^I_a$ \cite{Ashtekar:2004eh,Thiemann:2001gmi}. In that formalism, the observables which are well defined are akin to what one obtains in lattice gauge theory: holonomies which are the parallel transport of the connection along paths in the spatial manifold and fluxes which are the smearing of the densitised triads along 2-surfaces. In $\Uone$ BF-theory, analogous observables would be 
\begin{equation}
    h_c[\omega] = \mathcal{P}\exp\left(-\int_c \omega\right)  \quad , \quad B_c[B] = \int_c B,
\end{equation}
where $c$ ranges over the set of oriented paths in $M^{(2)}$, $h_c$ are $\Uone$-valued and $B_c$ are the line integrals of the canonical momentum $B$ on $M^{(2)}$. Such observables have a Poisson bracket of the form
\begin{equation}
\label{eq:commutator}
    \lbrace h_c, B_{c'} \rbrace = -I(c, c') h_c,
\end{equation}
where $I(c, c') \in \Z$ is the signed intersection number
\begin{equation}
    I(c, c') = \sum_{p \in e \cap c'} \sign (\epsilon_{ab} t^a_p t'^{b}_p),
\end{equation}
of the paths $c$ and $c'$. The sum is over all transversal isolated intersections of the two paths and $t_p , t_p'$ are the tangent vectors to $e$ and $e'$ respectively at $p$. Further, holonomies transform under gauge transformations only at their endpoints \cite{Bodendorfer:2016uat}. Namely, for the case of $\Uone$, then 
\begin{equation}
    h_c \mapsto e^{\lambda(s(c)) - \lambda(t(c))} h_c,
\end{equation}
where $s(c), t(c)$ denote the beginning and endpoints of the path $c$. Upon quantisation, one obtains the operators $\hat{h}_c , \hat{B}_{c}$ corresponding to these observables. These operators satisfy the appropriate canonical commutation relations such that
\begin{equation}
    [\hat{h}_c , \hat{B}_{c'}] = -i\hbar I(c, c')\hat{h}_c.
\end{equation}
Such relations can be realised on a Hilbert space of the form (\cite{Drobinski:2017kfm}, see also \cite{Dittrich:2014wpa})
\begin{equation}
    \hilb = L^2(\mathcal{A}, \extd \mu),
\end{equation}
where $\mathcal{A}$ is a space of distributional connections $A$ and $\mu$ a measure on that space \cite{Thiemann:2001gmi,Ashtekar:2004eh,Bodendorfer:2016uat}. A dense subset in $\hilb$ is given by cylindrical functions, that is, complex valued functions which depend on the connection $A$ through a finite number of holonomies along paths
\begin{equation}
\label{eq:cylindrical}
    \Psi(A) = \psi (h_{e_1}(A), \dots , h_{e_n}(A)).
\end{equation}
Here, $\psi : \Uone^n \rightarrow \mathbb{C}$. The paths $e_1, \dots , e_{n}$ only intersect at endpoints, which we denote as vertices $v$. All together, they are said to constitute a graph $\gamma$ with a set of edges $e_k \in E(\gamma)$ and vertices $v \in V(\gamma)$. 
\newline
The functions of the form \eqref{eq:cylindrical} can be expanded into irreducible representations of $\Uone^n$. In this expansion, each edge then carries one irreducible representation of $\Uone$, labeled by a charge $m \in \Z$. We will call elements of this basis $\{ \Psi_{\underline{m}} \}$ for the cylindrical functions charge networks. We say that in the charge network state $\{\Psi_{\underline{m}} \}$ with $\underline{m}=(m_{e_1},\ldots, m_{e_n})$, edge $e_i$ has charge $m_{e_i}$.
\newline
Furthermore, one has for every vertex an intertwiner which intertwines the tensor product of the holonomy representations of the incoming and outgoing edges at the vertices accordingly. For this simple case of $\Uone$, this is just a Kronecker $\delta$, requiring that all the incoming and outgoing charges on the attached edges sum to zero \cite{Thiemann:2021hpa}.
\newline
Following this prescription, one obtains a kinematical Hilbert space for $\Uone$ BF-theory. Holonomies are defined as multiplication operators on cylindrical functions, and flux operators by
\begin{equation}
    B_c\, \psi := [B_c,\psi],
\end{equation}
where $\psi$ is understood as a multiplication operator on the right and as a state on the left. The commutator on the right is defined by \eqref{eq:commutator} since a general $\Psi$ can be written as a sum of products of holonomies. Defined in this way, the fluxes are not selfadjoint for a general measure $\text{d}\mu$, but for some measures they can be complemented by the divergence of the measure in such a way that they become self adjoint without changing their commutation relations \cite{Sahlmann:2002xv}. 
\newline
An especially natural choice of measure $\mu$ is the Ashtekar-Lewandowski measure $\mu_\text{AL}$. Charge network states form an orthonormal basis for 
\begin{equation}
   \hilb_\text{AL} = L^2(\mathcal{A}, \extd \mu_\text{AL}),
\end{equation}
and the $B_c$ are selfadjoint. 
\newline
The analog of the classical constraints given in equation \eqref{eq:bfConstraints} would be two sets of constraint operators. The first enforces flatness of the holonomies and the second imposes $\Uone$ gauge invariance. For the case of the latter, this would mean that for every vertex in the graph, the sum of all the charges of the edges attached to it amounts to zero \cite{Bakhoda:2022rut}.  
\newline
The flatness constraint is slightly more difficult to implement. The solution turns out not to be normalizable in $ \hilb_\text{AL}$. Instead, it forms a different measure $\mu_\text{flat}$ -- the $\delta$-measure on flat connections \cite{Dittrich:2014wpa,Drobinski:2017kfm,Ashtekar:1994mh,Sahlmann:2011xu}. We will not describe this measure in detail but we will describe its pull back to the cylindrical functions on a single, contractible loop $\alpha$, in which this measure would be given by
\begin{equation}
    \int \Psi_\alpha \extd \mu_\text{flat} = \psi_\alpha(\one) = \int_{\Uone} \psi_\alpha(g) f_\alpha(g)  \extd \mu_0(g) =  \int \Psi_\alpha F_\alpha \extd \mu_\text{AL},
\end{equation}
with 
\begin{equation}
\label{eq:delta}
    f_\alpha(g)= \sum_{n=-\infty}^\infty g^n.
\end{equation}
Thus one can clearly see that the measure providing solutions of the flatness constraint is not absolutely continuous with respect to the Ashtekar-Lewandowski measure. In the following, we will nevertheless work with states in $\hilb_\text{AL}$. We will show that the solution \eqref{eq:delta} can be meaningfully approximated by states in that Hilbert space, see in particular 
Section \ref{sec:contributingStates}.
\newline
To bring the problem of solving the constraints of U(1)-BF theory to the computer, we have to make some drastic truncations of the infinitely many kinematical degrees of freedom. We will base the investigation on a fixed graph $\gamma$. However, the number of computational degrees of freedom in such a construction would be equal the number of charge network states on $\gamma$. With allowed charges $m \in \Z$, one clearly still has infinitely many degrees of freedom. Working with infinite computational degrees of freedom, however, often comes at a computational cost if at all feasible. Thus, in order to facilitate the implementation, one can consider deforming the algebra of the representation labels of the holonomies such that one is restricted to labels which fall within a certain admissible set of charges $M:= [-\jmax , \dots , \jmax] \subseteq \Z$. This can be done by requiring that given any two charge numbers $m, n \in \Z$, then
\begin{equation}
\label{eq:pbc}
    m \oplus n := (m + n + \jmax) \mod (2\jmax + 1) - \jmax.
\end{equation}
Clearly, this gives us a total of $2\jmax + 1$ allowed charges. We can then deform the algebra of holonomies such that for any given edge $e$
\begin{equation}
    h_e^m\, h_e^n= h_e^{m\, \oplus\, n}. 
\end{equation}
If we then generate charge network states from the AL-vacuum, it is ensured that their charges always lie in $[-\jmax , \dots , \jmax]$. This deformation creates a new problem, however. The modified holonomy operators are now not gauge covariant under general U(1) gauge transformations anymore. To see under which elements of U(1) they remain covariant, it is instructive to consider the edge case $\jmax \oplus 1 = -\jmax$. A viable $g \in \Uone$ would have to satisfy
\begin{equation}
    g^{\jmax} \circ g = g^{-\jmax} \quad \Leftrightarrow e^{i\varphi (\jmax + 1)} = e^{-i\varphi\jmax} \rightarrow \varphi_n = \frac{2n\pi}{2\jmax + 1},
\end{equation}
where $n \in \Z$, resulting in $2\jmax + 1$ group elements. This is nothing but the cyclic group $\Z_{2\jmax + 1}$, which can be identified with the quantum deformation $\Uqone$ of $\Uone$ at root of unity  \cite{2012arXiv1209.1135G}.
The set of charges $[-\jmax , \dots , \jmax]$ with the product \eqref{eq:pbc}
is then understood as the dual group $\Uqone^*$. 
\newline
In this case, we effectively impose the constraints of the continuum theory whereby we are still on its kinematical space, but now we have deformed the algebra of the holonomies to be restricted to a different (discrete) group. Unsurprisingly, this comes with subtle consequences. For example, the curvature and the Gau{\ss} constraint do not commute. Nevertheless, for large $|M|$, one approaches the behaviour of the continuum theory as will be seen later. We now refer to this model as a quantum $\Uqone$ BF-theory. We note that quantum deformation might also be an effective way to introduce a cutoff for a non-abelian gauge group. For the case of SU(2), see for example \cite{Dittrich:2018dvs} for some details.

\subsection{Solving physical systems numerically}
In a given quantum system, be it of a single- or many-particle nature, the wave-function $\Psi$ is the object which encompasses all the information needed to describe the quantum state of the system. In many-body physics problems, one encounters systems of interest which turn out to be very highly correlated. This leads the dimension of the system's Hilbert space to grow exponentially with its size which, in turn, means that one needs an exponential amount of information to fully determine $\Psi$. 
\newline
When looking to only describe the ground state however, one can get away with an exponentially less amount of information than that of a generic state. One either uses methods which rely on (i) efficient representation of the wave-function, or (ii) probabilistic, stochastic frameworks to efficiently sample different configurations of the system. For lower dimensions, owing to the limited entanglement entropy in the ground state, one uses (i) and one can employ techniques such as matrix product states (MPS) to describe the ground state \cite{White:1992zz,doi:10.1080/14789940801912366,Schollwoeck:2010uqf,Rommer:1997zz}. For higher dimensional systems, one resorts to using stochastic sampling such as Quantum Monte Carlo (QMC) \footnote{this applies for certain types of strongly correlated systems with positive-definite ground states} to efficiently obtain estimates of the physical quantities of interest \cite{Sandvik:1991axv,gubernatis_kawashima_werner_2016,1998PhLA..238..253P}. 
\newline
Both methods have shown rather remarkable capabilities and achievements. However, they do have their limitations. QMC for example is plagued by the sign problem ever since its conception \cite{Troyer:2004ge}, while the inefficiency of the compression approaches render them by definition unusable for higher dimensions. This leads to systems of interest which cannot be explored even numerically (e.g. finding the ground state of a strongly interacting fermionic system) \cite{2011RvMP...83..863P,doi:10.1142/1346,thouless1972quantum}.

\subsubsection{Neural network quantum states (NNQS)}
It is evident that the exponential complexity remains present in the task of representing a generic many-body wave-function. Consider for example a spin system of $M$ identical particles on a lattice. The Hilbert space would then be $\hilb = \bigotimes_{i}^{M}\hilb_{i}$ where $\hilb_i$ denotes the Hilbert space describing the $i$\textsuperscript{th} particle and spanned by the spin eigenstates $\ket{\sigma_i}$. A state in $\hilb$ can therefore be written as
\begin{equation}
\label{eq:wavefunction}
    \Psi(\underline{\sigma}) = \sum_{\sigma_1 \cdots \sigma_M} c_{\sigma_1 \cdots \sigma_M} \ket{\underline{\sigma}} \quad , \quad \ket{\underline{\sigma}} := \ket{\sigma_1}\otimes \cdots \otimes \ket{\sigma_M}
\end{equation}
To determine the state $\Psi(\underline{\sigma})$, one needs to determine the coefficients $c_{\sigma_1 \cdots \sigma_M}$. The idea then is to do this efficiently and using an (exponentially) smaller number of parameters than $\dim\hilb$. Therefore, one finds at hand a problem of dimensionality reduction and feature extraction. Such problems exist in other areas in sciences and engineering, and one very powerful tool that is used to address them is neural networks (NN) \cite{10.5555/523781}.
\newline
In a seminal work \cite{Carleo:2017nvk}, the landscape of numerical methods witnessed a transformative shift with the advent of neural network quantum states. NNQS is a representation for the wave-function which harnesses the power of artificial neural networks. Specifically, in \cite{Carleo:2017nvk}, it was first shown that a restricted Boltzmann machine (RBM) \cite{10.1007/978-3-642-33275-3_2} can be used as a variational ansatz for the Heisenberg model on a square lattice. The RBM is a type of neural network which employs a two layer (bipartite graph) structure which consists of a visible layer and a hidden layer. This simple yet unique architecture enables the RBM to capture the nature of the complex correlations which lie within a given quantum state. Additionally, RBMs are both generative and stochastic networks \cite{10.1007/978-3-642-33275-3_2} capable of learning probability distributions, which means one can train them using unsupervised learning, thus eliminating the typically tedious task in any machine learning problem of acquiring and appropriately labeling training data. The idea presented in \cite{Carleo:2017nvk} is broadly illustrated in the Figure \ref{fig:NNQS} below. 
\begin{figure}[h]
    \centering
    \includegraphics[scale=0.6]{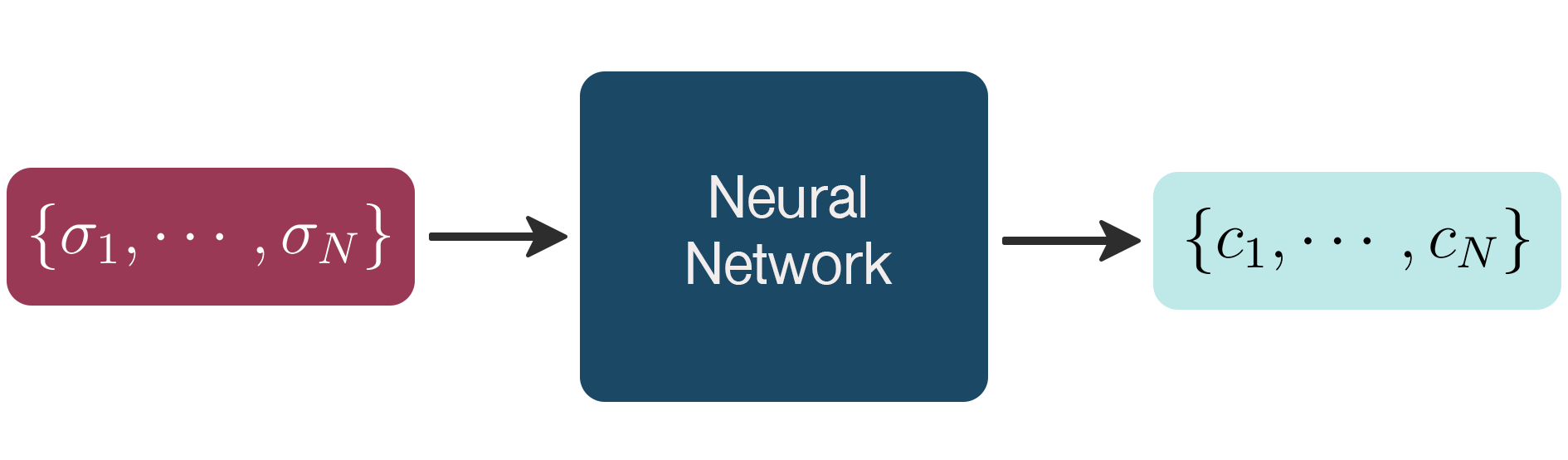}
    \caption{The general idea of the NNQS variational ansatz is shown in a simplified manner. From the left, one provides the set of basis states which characterise the physical model. Once provided to the neural network, it will produce for every basis state its amplitude  which contributes to the wave-function. These amplitudes are provided by the network such that the wave-function obtained minimises the expectation value of an operator, typically the Hamiltonian of the model.}
    \label{fig:NNQS}
\end{figure}
\newline
As shown in the figure, the problem is comprised of having batches of the basis states of the physical system as inputs, and the corresponding amplitudes as the output. This manner of interpreting the outputs of the network as the amplitudes allows us to write the wave-function in equation \eqref{eq:wavefunction} as
\begin{equation}
    \Psi(\underline{\sigma}) = \sum_{\sigma_1 \cdots \sigma_M} c_{\sigma_1 \cdots \sigma_M}^{\mathrm{NN}} \ket{\sigma_1}\otimes \cdots \otimes \ket{\sigma_M},
\end{equation}
where $c_{\sigma_1 \cdots \sigma_M}^{\mathrm{NN}}$ are the amplitudes determined by the RBM. One can then train the network such that it provides a specific set of amplitudes, namely the ones which minimise the expectation value of a given observable (see \cite{Carleo:2017nvk} and Appendices therein). Given that this observable would typically be the Hamiltonian of the system, one would then arrive at the ground state of the Hamiltonian, and its corresponding ground energy. 
\newline
This demonstrated not only the feasibility of employing RBMs, and more generally NNs, as a viable variational ansatz to solve quantum problems but also the broader potential of NNQS in solving several many-body quantum physics problems \cite{Carleo:2017nvk,PhysRevB.100.125124,PhysRevB.97.035116,Choo_2020}, performing well when compared to state of the art numerical methods  \cite{PhysRevB.100.125124}. The key being that using the NNQS one has the ability to efficiently parametrise highly entangled quantum states which are of importance in systems with strong correlations. 
\newline
Since the advent of NNQS, researchers have expanded upon this foundation of using an RBM architecture \cite{Nomura_2021,PhysRevLett.128.090501,Lin_2022}, exploring different architectures ranging from simple feed-forward neural networks to convolutional neural networks \cite{PhysRevB.100.125124} and more. The application of NNQS saw itself being realised not only in several fields of physics but also beyond the scope of physics such as in quantum chemistry \cite{Choo_2020,Wu:2023gic,Wu:2023kwx}. 
\newline
While research continues to refine and expand upon this ansatz, it has remained foreign to the field of quantum gravity. In this work, we take the first step into applying NNQS in a quantum gravity setting by considering the simple $\Uone$ BF-theory quantised using LQG methods as described above hence demonstrating the applicability of NNQS to this domain. The paper is divided into two parts. The first part concerns the implementation of the physical model on the computational basis and the second discusses the results obtained. The exact structure of the paper is as follows:
\begin{enumerate}
    \item In Section \ref{sec:implementingThePhysicalSystem}, the implementation of the quantum $\Uqone$ BF-theory model on a computational basis is presented whereby the exact action and definition of the operators as well as the terminology related to the computational basis are defined.
    \item A discussion over the architecture of the neural network used in this work is presented in Section \ref{sec:networkArchitecture}.
    \item The last part of the paper reports on the results obtained in which we discuss the following:
    \begin{enumerate}
        \item[$\bullet$] In Section \ref{sec:groundState}, the nature of the solutions to the quantum Hamilton constraint at different charge cutoffs is discussed.
        \item[$\bullet$] Section \ref{sec:quantumFluctuations} concerns the quantum fluctuations of the observables of the theory.
        \item[$\bullet$] Next, a discussion on the type of basis states which contribute to the solution is presented in Section \ref{sec:contributingStates}.
        \item[$\bullet$] Lastly, a discussion on the entropy of the solution and the entanglement entropy is done in Section \ref{sec:entropy}.
    \end{enumerate}
    \item The appendices provide some details on quantum states with real coefficients. Further, some technical details related to the simulation process are presented whereby a brief discussion on variational Monte-Carlo (VMC) methods, Markov chain Monte-Carlo (MCMC) and some statistical tools necessary in both of these processes are discussed.
\end{enumerate}

\section{The physical and computational model}

In this section we address two points, namely (i) the exact implementation of the above physical model in a computational basis and (ii) the architecture of the neural network used in this work for the NNQS variational ansatz. A brief discussion about the technical details concerning the tools employed in the process concerning the neural network of the work are presented at the end of this section.

\subsection{Implementing the quantum $\Uqone$ BF-theory}
\label{sec:implementingThePhysicalSystem}
We begin by setting the terminology. Our quantum $\Uqone$ BF-theory model is defined on oriented graphs the vertices of which are labeled by integers $v$ and the edges by tuples $e_{if} := (v_i, v_f)$ where $v_i, v_f$ denote the vertices it emanates from and is incident at respectively.
\newline
We view now our oriented graph as an undirected graph, thus momentarily ignoring orientation. In doing so, one can identify the cycle space of the graph and moreover, the cycle basis of the cycle space \cite{LIEBCHEN2007337}. The cycle basis can include cycles of different lengths. In our work, we will denote by the \textit{minimal loops} of a graph the set of cycles in the cycle basis of the graph which have the smallest length. By \textit{$N$-L graph} we mean a graph which has $N$ minimal loops. We choose $L_K(\gamma)$ to denote the set of minimal loops of a graph $\gamma$ where $K$ denotes their length. We will often drop the subscript $K$ and use $L(\gamma)$ only. In our work, we will consider graphs which are \textit{non-trivial} whereby $|L(\gamma)| \geq 2$\footnote{For $N=1$, the Gau{\ss} constraint would be identically satisfied}. Lastly, every minimal loop will have an \textit{incrementing orientation}. That is, a minimal loop $\alpha \in L_K(\gamma)$ passes through some vertices $v_0 \rightarrow v_1 \rightarrow \dots \rightarrow v_K$ such that $v_0 < v_1 < \dots < v_K$. Therefore, the orientation of the minimal loops need not coincide with the orientation of $\gamma$. 
\newline
As the edges carry the charges, our physical system in the computational basis will be defined on the \textit{dual graph} $\Tilde{\gamma}$ of the given graph $\gamma$. This is required as all the computational tools used in this work require the Hilbert space to be defined on the vertices of the graph. In our work, to construct the dual $\Tilde{\gamma}$ of a graph $\gamma$ we mean that one constructs a new graph $\Tilde{\gamma}$ from $\gamma$ such that now every edge $e \in E(\gamma)$ is represented by a vertex $\Tilde{v} \in V(\Tilde{\gamma})$ and every vertex $v \in V(\gamma)$ is represented by an edge $\Tilde{e} \in E(\Tilde{\gamma})$. The dual graphs are taken to be non-oriented. Rather, they are used only in the computational basis to carry the action of the operators, respecting the orientation, as they would be applied on the graph itself. Thus, while the physics is defined on $\gamma$, the computational model is implemented on $\Tilde{\gamma}$. Note that we will use the terminology \quotes{dual vertex/edge} to refer to $\Tilde{v}$ and $\Tilde{e}$ respectively.
\newline
We denote by $\jmax \in \Z$ the maximal value of charge allowed in the model. The edges then can carry any charge $m \in \Z$ from the set of $2\jmax + 1$ charges denoted by $M := [-\jmax, \dots , \jmax] \subseteq \Z$. In this picture, an analogy one can draw is that every dual vertex can be understood as a particle whose state is written in the angular momentum basis with allowed angular momentum numbers $m \in M$. Hence, the system at hand is equivalent to an identical $N$-body system where the Hilbert space on every dual vertex $\hilb_{\Tilde{v}}$ is spanned by the basis states labeled by $m \in M$. As a result, the Hilbert space over the entire graph is nothing but the $N$-fold tensor product of $\hilb_{\Tilde{v}}$. Such a Hilbert space then has dimensions of $\dim\hilb_{\Tilde{\gamma}} = (2\jmax + 1)^N$. 
\newline
Now we are ready to look at the implementation of the operators in the computational basis. It was seen in prior sections that the constraints of theory are composed of a curvature term $\hat{F}$ and a Gau{\ss} term $\hat{G}$. We begin with the curvature term which enforces flatness of the holonomies of loops in our graph. We now define a \textit{minimal loop holonomy operator} and denote it by $\hat{h}_\alpha$ such that 
\begin{equation}
    \hat{h}_{\alpha} = :\prod_{v \in V(\alpha)} \overset{\leftrightarrow}{\hat{h}}_v:,
\end{equation}
where by $:\hat{O}(v_1) \hat{O}(v_2):$ we denote the ordered product of operators such that $:\hat{O}(v_1) \hat{O}(v_2): = \hat{O}(v_1) \hat{O}(v_2)$ for $v_1 < v_2$. The minimal loops can have a different orientation with respect to the graph. Hence, the $\leftrightarrow$ indicates that the holonomy operators $\hat{h}_v$ are either as they are or their adjoint depending on the orientation of the loop with respect to the underlying graph. Since holonomy operators in the $\Uqone$ gauge act as multiplication operators changing the charge numbers, then $\hat{h}_\alpha$ are graph preserving, merely just changing the charges on the edges included in $\alpha$ by acting as raising or lowering operators with respect to the charge numbers. To keep the charge of edge within the allowed set of charges, the periodic boundary condition \eqref{eq:pbc} is imposed. The basic holonomy operators in the computational basis then take the form
\begin{equation}
\hat{h}_v = \begin{pmatrix}
0 & 1 & 0 & \ldots & 0 \\
0 & 0 & 1 & \ddots & \vdots \\
\vdots & \vdots & \ddots & \ddots & 0 \\
0 & 0 & \ldots & 0 & 1 \\
1 & 0 & \ldots & 0 & 0 \\
\end{pmatrix} 
\quad , \quad
\hat{h}_v^\dagger = \begin{pmatrix}
0 & 0 & \ldots & 0 & 1 \\
1 & 0 & \ldots & 0 & 0 \\
0 & 1 & \ddots & \vdots & \vdots \\
\vdots & \ddots & \ddots & 0 & 0 \\
0 & \ldots & 0 & 1 & 0 \\
\end{pmatrix},
\end{equation}
where the 1 in the bottom leftmost or upper rightmost entry enforce the periodic boundary condition. For the case of our gauge group, these basic holonomy operators commute and hence, the ordering of operator products in the minimal loop holonomy operator can be neglected. 
\newline
The curvature constraint can be faithfully taken to enforce flatness over the minimal loops of the graph, as every loop in the graph can be written in terms of minimal loops. In this work, we will construct a master constraint, and hence we consider squares of the constraints and we do not consider any smearing. The curvature constraint then takes the form 
\begin{equation}
    \hat{F} = \sum_{\alpha \in L(\gamma)} \left(\hat{h}_{\alpha} - \one\right)\left(\hat{h}_{\alpha}^\dagger - \one\right) = \sum_{\alpha \in L(\gamma)} \left(-\hat{h}_{\alpha} - \hat{h}_{\alpha}^\dagger + 2\one\right).
\end{equation}
The remaining Gau{\ss} constraint, in this context, translates to charge conservation at every vertex which is expressed as the positive operator
\begin{equation}
    \hat{G} = \sum_{v \in V(\gamma)} \left(\sum_{e \in E_i(v)} \hat{N}_{e} - \sum_{e' \in E_o(v)} \hat{N}_{e'}\right)^2.
\end{equation}
Here, by $E_i(v)$ and $E_o(v)$ we mean edges attached to $v$ which are either incident at or outgoing from $v$ respectively. By $\hat{N}_e$ we denote an operator which just returns the charge at the edge it acts on. In the computational basis, this is a $\dim\hilb_{\Tilde{\gamma}} \times \dim\hilb_{\Tilde{\gamma}}$ diagonal matrix with $\mathrm{diag}(\jmax , \dots , -\jmax)$. To this end, we now have all the tools necessary to construct this quantum $\Uqone$ BF-theory and its constraints on the computational basis. The constraint which we solve is a master constraint which is denoted by 
\begin{equation}
    \hat{C} = \hat{F} + \hat{G}.
\end{equation}
Given the formulation, we see that this constraint is not graph changing. Consequently, one can obtain a matrix of size $\dim\hilb_{\Tilde{\gamma}}\times\dim\hilb_{\Tilde{\gamma}}$ for the constraint $\hat{C}$. If one understands this as the Hamiltonian of the quantum model, then from then onwards one can both use the NNQS ansatz and perform exact diagonalisation using standard numerical techniques (e.g. Lanczos algorithm \cite{Lanczos:1950zz}) to obtain the minimum eigenvalue and the corresponding eigenvector. These eigenvectors, then, would be the solutions of the constraint $\hat{C}$.

\subsection{The network architecture}
\label{sec:networkArchitecture}
In this section, we provide a general outline for the network architecture used in this work. For a detailed discussion, see Appendix \ref{app:D}.It was observed in our work that in such $\Uqone$ BF-theory models, the RBM is not the optimal neural network architecture. As a result, our wave-function is instead parametrised as a variant of a specific type of feed-forward neural network (FFN), referred to in the literature as a convolutional neural network (CNN) \cite{OShea2015AnIT,Yamashita2018}. A feed-forward network is a network composed of a series of transformations applied to the input of the network. These transformations are referred to as \textit{layers}. Thus, a $k$ layer feed-forward network will take an input $I_{(0)}$ and use $k$ layers to map it to the output $I_{(k)}$. Generally, a layer implements an affine transformation such as
\begin{equation}
\label{eq:ffn}
    I_{i}^{(n)} = f\left(b_{i} + \sum_{j} w_{ij}I_{j}^{(n - 1)}\right).
\end{equation}
Here, $w_{ij}$ are called \textit{weights} and $b_i$ are called \textit{biases} (see Appendix \ref{app:A} on how the network parameters (weights and biases) are updated during the training process to achieve convergence). This operation is conducted element-wise. Lastly, the function $f$ is simply some non-linear function which introduces non-linearity to the network. 
\newline
A CNN has spatially local structures, the \textit{convolution} layers. Just as in equation \eqref{eq:ffn}, the CNN applies a similar type of transformation. However, some constraints are imposed on the weights and the biases \cite{OShea2015AnIT,Yamashita2018}. The transformation is separated into several \textit{channels} which are characterised respectively by a matrix called the \textit{kernel} (or \textit{filter}). The convolution layers scan the input with the kernels and extract local patterns and features. Each kernel is applied across different regions of the input, allowing the network to detect spatial hierarchies. By choosing different sized kernels, one can extract fine or coarse features of the data. By sharing the parameters across different spatial locations, the network becomes more efficient in learning and recognising patterns, thus reducing the overall size of the network compared to an FFN. Lastly, CNNs often will include \textit{pooling} layers, which down-sample the dimensions of the data, thus reducing the computational complexity while retaining and focusing on the most important features.
\newline
Thus, the CNN employs a transformation more complicated compared to the one shown in equation \eqref{eq:ffn} but nevertheless carries the same principle. Furthermore, it has been shown in the literature the CNN type networks can be more effective than RBMs when employed to solve systems where the entanglement entropy follows a volume-law \cite{PhysRevLett.122.065301}. Moreover, recent work has shown that CNNs perform well with frustrated physical models \cite{PhysRevB.100.125124}. This can also be seen in our case, as the operators in our constraint do not commute after imposing the charge cutoff, thus introducing in a sense frustration. The takeaway point here is that some problems lend themselves by nature to be solved by a CNN rather than an RBM. Motivated by all the above, we construct a network which includes convolutions the general outline of which is shown in Figure \ref{fig:ourArchitecture}.
\begin{figure}[h]
    \centering
    \includegraphics[width=\textwidth]{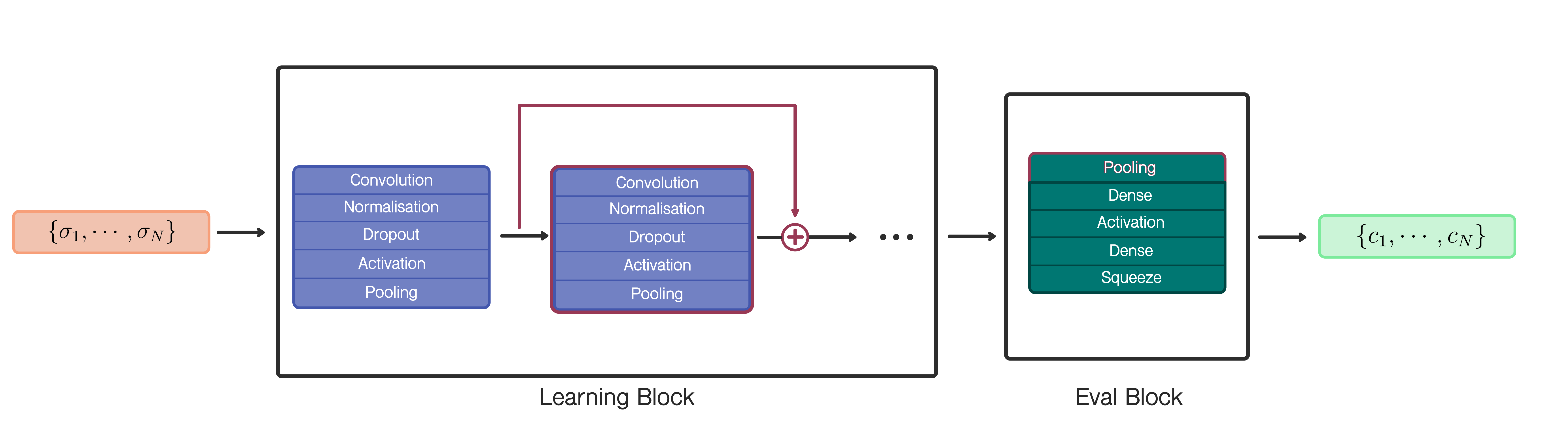}
    \caption{The architecture of the neural network used in this work. From the left, the network accepts batches of MCMC sampled configurations. The \textit{Learning Block} includes several convolution blocks, shown in purple, each composed of different layers. Skip connections are shown in red. Lastly, the \textit{Eval Block} is a simple FFN the output of which, shown at the utmost right, is comprised of the amplitudes of each basis state in the given batch of configurations.}
    \label{fig:ourArchitecture}
\end{figure}
\newline
As can be seen in Figure \ref{fig:ourArchitecture}, the network is composed of two major blocks: the \textit{learning block} and the \textit{(eval)uation block}. The first is tasked to learn, that is to perform feature extraction on the given inputs. It is composed of several convolutional sub-blocks. These sub-blocks include convolution layers but also normalisation\footnote{A layer normalising the input to the activation layer to prevent instabilities and fluctuations}, dropout\footnote{A layer where some weights are randomly \quotes{switched off} to prevent the network from biasing certain weights}, activation and pooling layers. The number of these convolution sub-blocks varies with respect to $\dim\hilb_{\Tilde{\gamma}}$. Furthermore, for large $\jmax$, then skip connections\footnote{A type of transformation which connects different previous layers to each other, these can be \textit{long} or \textit{short} skip connections.} resembling what would be implemented in a residual network \cite{7780459} are implemented, as shown in red in the figure. The layers of the convolution sub-blocks then change to adhere to that of a residual network. Lastly, the eval block is a simple FFN, the output of which is then understood as the amplitudes of the wave-function. The non-linearity is introduced by using the hard sigmoid linear unit (SiLU) activation in most blocks, and the sigmoid or rectified linear unit (ReLU) in others, such as the FFN block. Thus, our non-linearity can take the forms
\begin{equation}
    f(x) =
    \begin{cases}
        \text{(i) Sigmoid} & : \sigma(x) = 1/(1 + e^{-x}) \\
        \text{(ii) ReLU} & : \max(0, x), \\
        \text{(iii) Hard SiLU} & 
        : \begin{cases}
            x \cdot \sigma(x), & \text{if } x > \text{threshold} \\
            0, & \text{if } x \leq \text{threshold}
        \end{cases}
    \end{cases}
\end{equation}
We note that certain neural networks, which are inherently invariant or equivariant under certain groups, have been used \cite{Luo:2022jzl,PhysRevLett.121.167204,PhysRevResearch.5.013216}. However, we opt not to implement such networks in our work. We will show that despite that, the network is able to respect the gauge, and effectively narrowing down the search for only the gauge invariant states. 
\newline
Lastly, we saw that all the operators in the computational basis are real valued. The wave-function in general has complex valued amplitudes. This would mean that the network used in the NNQS ansatz should have complex valued parameters to produce complex valued amplitudes. Nevertheless, in the work presented we will work with real valued networks, which will nevertheless lead us to the solutions of the constraints (see Appendix \ref{app:C}). Note, however, that this choice in the computational basis is strictly for the sake of computational costs. In principle, this is implementable but comes at the cost of requiring more computing resources.

\subsection{A note on the technical details}
The industry standard for deep learning problems is to automate hyperparameter tuning. This ensures that the parameters needed for the network to run and produce optimal results are not chosen arbitrarily and moreover are the most optimal ones. In our work, these hyperparameters include the learning rate as well as other more technical related parameters. This has also been done in our work, albeit the technical details are left out. More importantly, to use the NNQS ansatz, a variational Monte-Carlo (VMC) \cite{gubernatis_kawashima_werner_2016} process is done, where Markov Chain Monte-Carlo (MCMC) \cite{Dellaportas2003} simulations are conducted (see Appendix \ref{app:A} for an outline and \cite{Carleo:2017nvk} for more details). In MCMC, one can have different samplers (e.g. Metropolis-Hastings type samplers \cite{Chib1995}) and the MCMC process has its own hyperparameters (e.g. number of chains for the MCMC). To ensure that the MCMC has converged, the chains have reached an equilibrium and that one indeed can reliably trust the results, the $\hat{R}$ (Gelman-Rubin) statistic is used \cite{doi:10.1080/10618600.1998.10474787}. For $\hat{R}$ values larger than 1.1 (or even 1.01) \cite{Gelman2013BayesianDA,Vehtari_2021}, one in principle should discard the samples and consequently the results. In our case, we suggest to use a higher $\hat{R}$ threshold for such models than the ones commonly used in the literature if one uses Metropolis type sampler. The reason for that, as well as the empirical results which will support the validity of the obtained data, is discussed in Appendix \ref{app:B}. All the results presented in this work are obtained using an exact sampler using exact inference methods, and thus are \textit{not} subject to this argument.

\section{Results}
\label{sec:results}

In this section, we discuss the results obtained by solving our physical model as described above using the NNQS ansatz. Specifically, we will present and discuss: (i) the nature of the solution of $\hat{C}$ for different charge cutoffs, (ii) the quantum fluctuations of the minimal loop holonomies, (iii) the contributing basis states in the solution for different charge cutoffs and lastly, (iv) the entropy and the second R\'{e}nyi entanglement entropy of the solution.

\subsection{Computing $\min\expect{\hat{C}}$ and solutions of $\hat{C}$}
\label{sec:groundState}
The starting point is to fix the graph on which we define our model on. We will work with the smallest, non-trivial graph (contains at least two minimal loops). The results obtained here are not graph dependent as 3-L as well as 4-L graphs have been also studied. The choice of presenting a 2-L graph model is solely based on choosing the setup with the least computational cost, allowing for the most amount of numerical results. The graph in use is shown in Figure \ref{fig:graph}.
\begin{figure}[h]
    \centering
    \begin{subfigure}{.4\textwidth}
        \centering
        \includegraphics[width=1\linewidth]{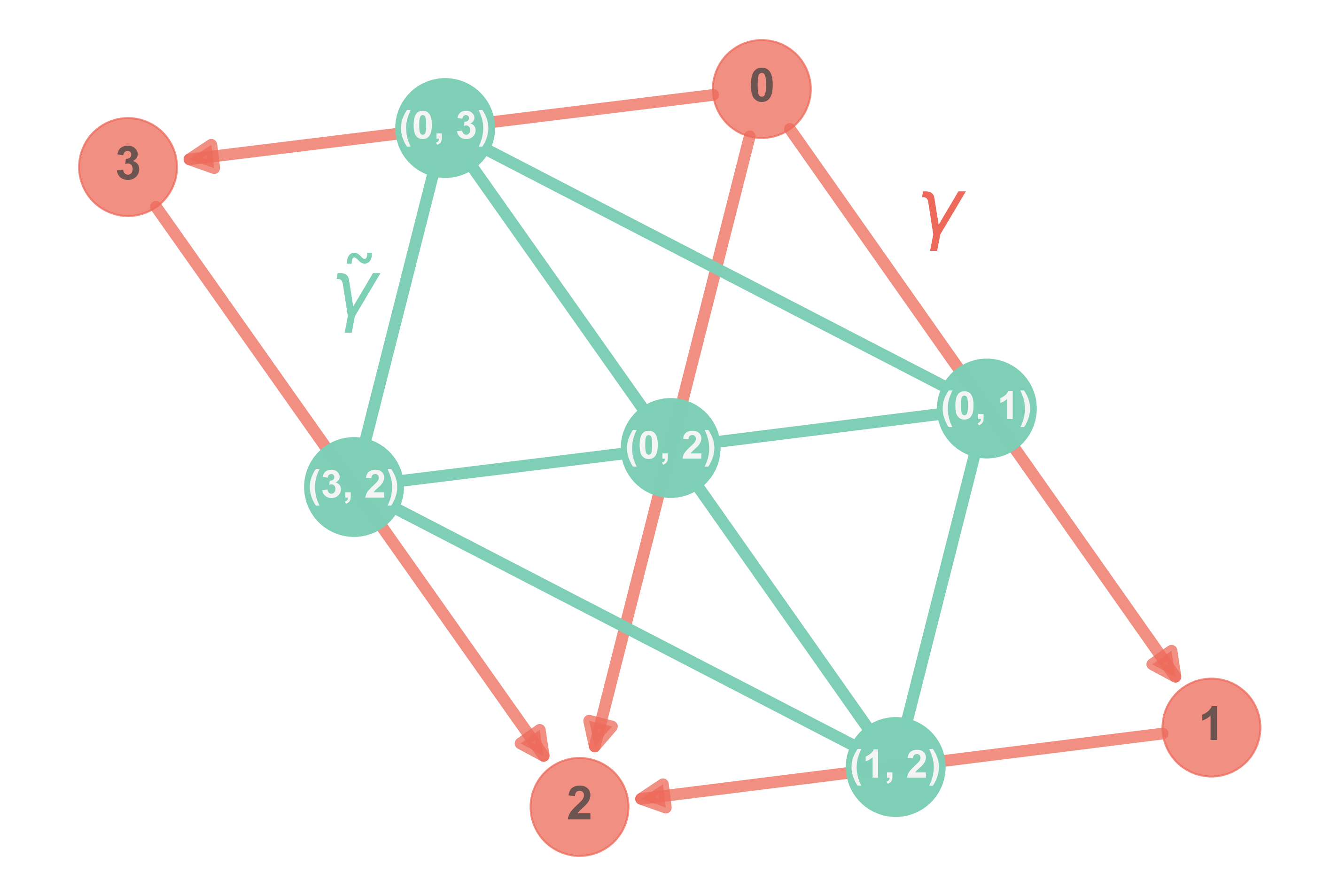}
        \caption{}
        \label{fig:graph_sub1}
    \end{subfigure}%
    \begin{subfigure}{.4\textwidth}
        \centering
        \includegraphics[width=1\linewidth]{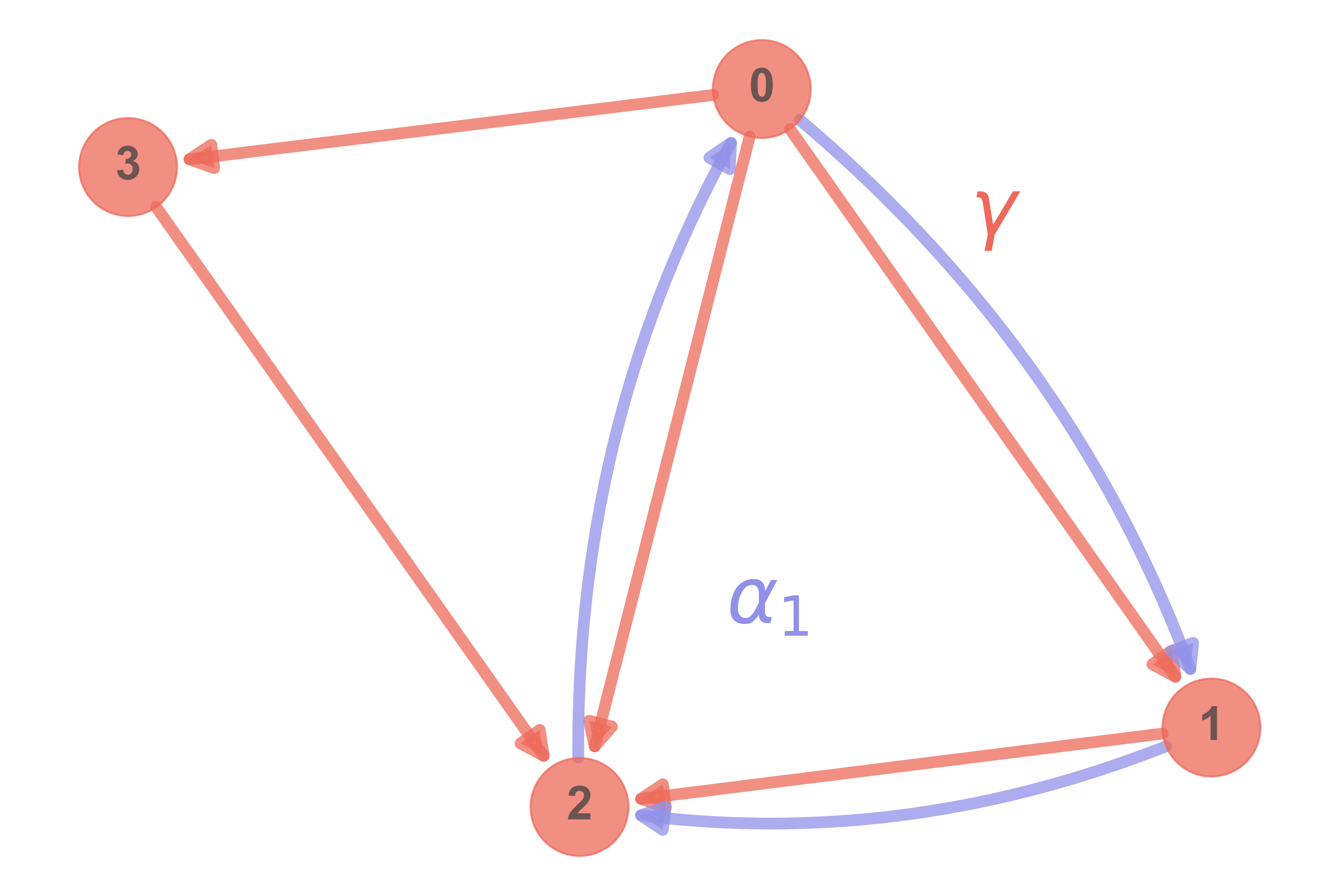}
        \caption{}
        \label{fig:graph_sub2}
    \end{subfigure}
    \caption{On the left, an illustration of the chosen oriented graph $\gamma$ shown in red and its dual graph $\Tilde{\gamma}$ shown in green. On the right one of the two minimal loops $\alpha_1$ of $\gamma$ is shown in purple. The edges of $\alpha_1$ are curved purely for illustrative purposes.}
    \label{fig:graph}
\end{figure}
\newline
In Figure \ref{fig:graph_sub1}, the 2-L graph $\gamma$ shown in red denotes the graph on which the analytical model is defined on. One of the minimal loops is shown on the right in Figure \ref{fig:graph_sub2}. The edges of $\alpha_1$ are curved purely for illustrative purposes. The dual graph $\Tilde{\gamma}$ denotes the graph on which the computational basis is defined on.
\newline
Imposing now a charge cutoff on the model, this gives us for an example of $\jmax = 2$, a Hilbert space with dimension $\dim\hilb_{\Tilde{\gamma}} = 3125$. Since our networks are not inherently gauge invariant, that means that we explore the entire Hilbert space, not only the gauge invariant subspace, and thus we need to determine the amplitudes of 3125 basis states which give a minimum value for $\expect{\hat{C}}$ to fully be able to write down the expression of the solution $\Psi$. With the exception of the $\jmax = 1$ case, the number of parameters in our network is always less than $\dim\hilb_{\Tilde{\gamma}}$. For example, for $\jmax = 6$, the network has only $5812$ parameters to be optimised, constituting only a small fraction (1.57\%) of the dimensions of the Hilbert space (371293). Now, using the NNQS ansatz, it is possible to find the solution to $\hat{C}$ using a significantly lower amount of information than $\dim\hilb_{\Tilde{\gamma}}$.
\newline
For such a small graph, the system is also exactly diagonalisable even for relatively high charge cutoff. Here, exact diagonalisation (ED) is simply finding the lowest eigenvalue for the matrix which represents our constraint using standard numerical techniques. The method used in this work is the Lanczos algorithm. As shown in Figure \ref{fig:spin5_results}, the network quickly converges to the exact diagonalisation result shown in red, and one does not need many iterations to reach a good accuracy. This $\jmax = 5$ simulation shown in the figure concluded with an accuracy in $\min \expect{\hat{C}}$ of approximately 92.819\%. 
\begin{figure}[h]
    \centering
    \includegraphics[scale=0.4]{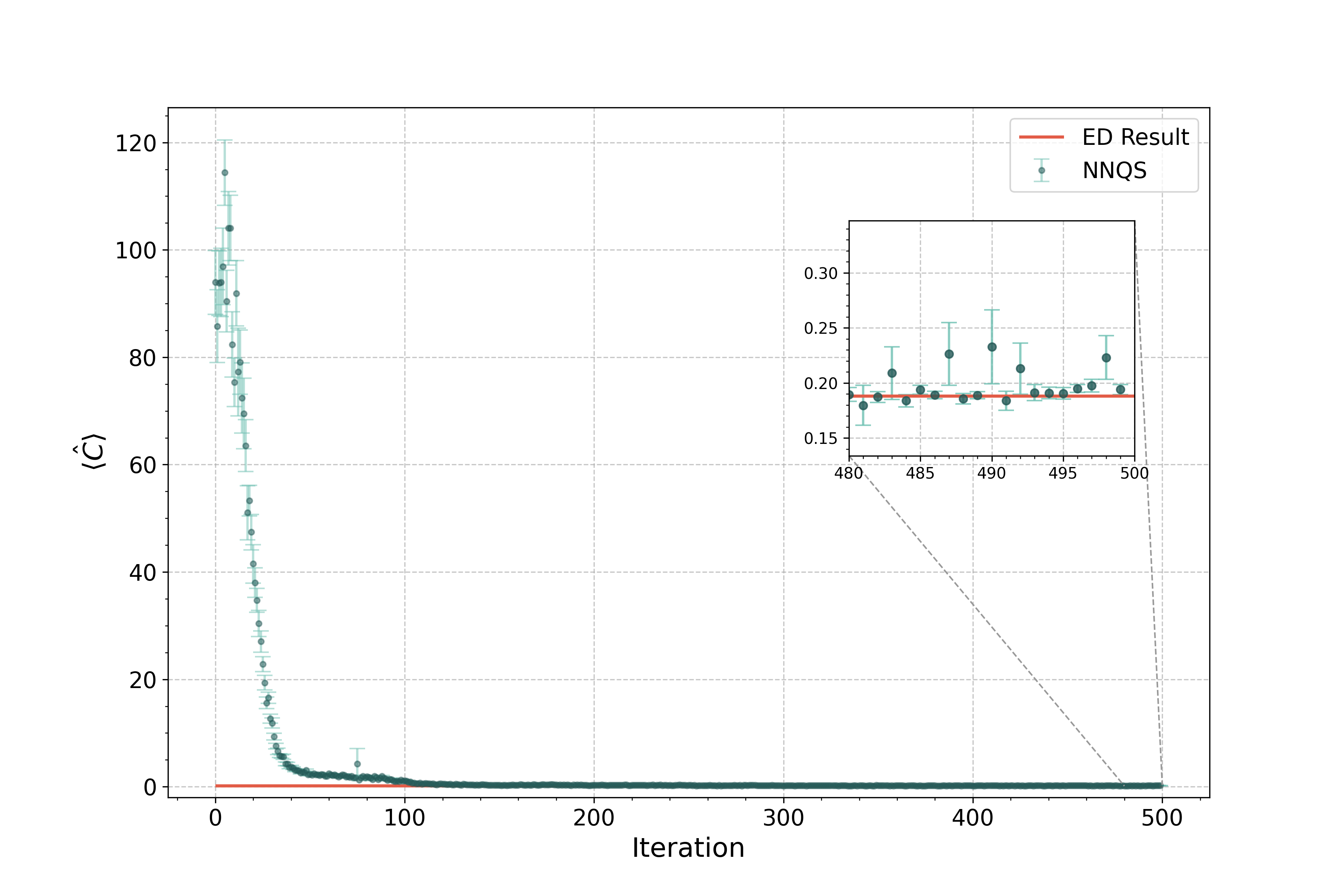}
    \caption{A $\jmax = 5$ simulation is shown where the NNQS ansatz is used to find the solution for the master constraint of the $\Uqone$ BF-theory posed on the graph described in Figure \ref{fig:graph}. The exact diagonalisation result for $\min\expect{\hat{C}}$ is shown in red and the NNQS result is shown in green for every iteration in the simulation. The accuracy in this case is 92.819\%.}
    \label{fig:spin5_results}
\end{figure}
\newline
In Table \ref{tab:groundstate_results}, we show the results different simulations for different charge cutoffs ranging from $\jmax = 1$ to $8$. The ED results are compared with the results obtained from the NN. It is evident that the network is capable to obtain and maintain high accuracy relative to the ED result, indicating that the architecture employed can be used for solving such models. This becomes especially useful once the considered models become too large to be exactly diagonalised as the high accuracy establishes a degree of confidence in the results obtained even if one cannot compare it with ED results.
\begin{table}[h]

    \centering
    \begin{tabular}{c|cccc}
    \rowcolor{lightergreen!20}
        $\jmax$ & $\min\expect{\hat{C}}_{\mathrm{ED}}$ & $\min\expect{\hat{C}}_{\mathrm{NN}}$ & Accuracy (\%) & $\langle\Psi^{(\mathrm{NN})} | \Psi^{(\mathrm{ED})} \rangle^2$ \\
        \hline
        1 & 0.835968 & 0.99866 ± 0.00028 & 80.538 & 0.9895 \\
        2 & 0.601165 & 0.625 ± 0.0013 & 96.034 & 0.9979 \\
        3 & 0.389553 & 0.3942 ± 0.0045 & 98.818 & 0.996 \\
        4 & 0.263623 & 0.2648 ± 0.0013 & 99.539 & 0.9906 \\
        5 & 0.187973 & 0.1882 ± 0.0017 & 99.853 & 0.989 \\
        6 & 0.140084 & 0.1412 ± 0.0035 & 99.189 & 0.9834 \\
        7 & 0.108159 & 0.1133 ± 0.0066 & 95.218 & 0.983 \\
        8 & 0.085918 & 0.0833 ± 0.0079 & 96.931 & 0.9598 \\
    \end{tabular}
    \caption{The value of $\min\expect{\hat{C}}$ at different charge cutoff values are shown. The exact diagonalisation results (ED) are compared to the results from the neural network (NN). Note that values are truncated to 6 decimal values at most.}
    \label{tab:groundstate_results}
\end{table}
\newline
Further seen in Table \ref{tab:groundstate_results} is the effect of having discrete group $\Uqone$ instead of the continuum $\Uone$ theory whereby now one obtains $\min\expect{\hat{C}} > 0$. This is despite the fact that it was observed, using the NN and exact diagonalisation, that $\min\expect{\hat{F}}$ and $\min\expect{\hat{G}}$ are independently is zero. Nevertheless, one sees that even for relatively low $\jmax$, one starts to recover the continuum theory as one sees that $\min\expect{\hat{C}} \rightarrow 0$ for $\jmax \rightarrow \infty$. 
\newline
We attribute the lower accuracy of the $\jmax = 1$ cutoff case to the fact that the network is probably overfitting due to the number of parameters being more than $\dim\hilb_{\Tilde{\gamma}}$. Furthermore, as shown on the two rightmost columns, the accuracy of the results obtained from the NN are measured in two ways: (i) the relative error compared to the value of $\min\expect{\hat{C}}$ as obtained from exact diagonalisation and (ii) the inner product of the solution with that of obtained from exact diagonalisation. We see that in fact, we not only have a good accuracy in terms of being close to the true $\min\expect{\hat{C}}$, but also that the solutions obtained via ED and the NN are nearly identical, as evident from their inner product, indicating that we do converge to the true solution as obtained from exact diagonalisation. Lastly, we note that the error bars in this work are computed as shown in Appendix \ref{app:A}.

\subsection{Quantum fluctuations of minimal loop holonomies}
\label{sec:quantumFluctuations}
Once the solution has been obtained, one can compute expectation values of observables. In all cases of different $\jmax$, it was seen that $\expect{\hat{G}}$ is almost 0. This shows that gauge invariance is almost perfectly imposed. In fact, it seems to be the case that the Gau{\ss} part $\hat{G}$ of the constraint is often more enforced than the curvature counterpart. This eliminates the need to implement gauge invariant neural networks, as the current network has no trouble narrowing down the search to gauge invariant states. 
\newline
In our simple model, one observable of interest that exist is the minimal loop holonomy $\hat{h}_{\alpha_k}$. Additionally, the curvature part of the master constraint effectively imposes that these minimal loop holonomies are as close to $\one$ as possible. Once the solution is obtained, in principle one expects that computing $\expect{\hat{h}_{\alpha_k}}$ should yield a value as close to 1 as possible if the constraint is well satisfied and the solution is accurate. Moreover, we can also look at quantum fluctuations in $\hat{h}_{\alpha_k}$, which we define as
\begin{equation}
    \Delta \hat{h}_{\alpha_k} := \expect{(\hat{h}_{\alpha_k} + \hat{h}_{\alpha_k}^\dagger)^2} - \expect{\hat{h}_{\alpha_k} + \hat{h}_{\alpha_k}^\dagger}^2.
\label{eqn:q_fluctuations}
\end{equation}
In our graph, $L(\gamma) = 2$ and thus we have two observables to compute for. The following table shows the results for different charge cutoffs, each obtained in the solution of the corresponding simulation shown in the Table \ref{tab:groundstate_results}.
\begin{table}[h]
    \centering
    {\renewcommand{\arraystretch}{1.5}
    \begin{tabular}{c|cccccccc}
        \rowcolor{lightergreen!20}
        \multirow{2}{*}{\cellcolor{lightergreen!20} Observable} & \multicolumn{8}{c}{\cellcolor{lightergreen!20} Charge Cutoff ($\jmax$)} \\
        
        \cline{2-9} 
        \rowcolor{lightergreen!20}
        \multirow{-2}{*}{\cellcolor{lightergreen!20} Observable}
         & 1 & 2 & 3 & 4 & 5 & 6 & 7 & 8 \\
        \hline

        $\expect{\hat{h}_{\alpha_1}}$ & 0.769 & 0.834 & 0.878 & 0.893 & 0.947 & 0.981 & 0.969 & 0.968 \\
        $\Delta \hat{h}_{\alpha_1}$ & 1.281 & 0.551 & 0.179 & 0.106 & 0.046 & 0.004 & - & - \\

        \hline

        \multicolumn{9}{c}{} \\[-4.6ex]
        \hline

        $\expect{\hat{h}_{\alpha_2}}$ & 0.746 & 0.819 & 0.92 & 0.931 & 0.95 & 0.955 & 0.986 & 0.981 \\
        $\Delta \hat{h}_{\alpha_2}$ & 1.215 & 0.581 & 0.284 & 0.089 & 0.071 & 0.012 & - & - \\
        
        \hline
    \end{tabular}
    }
    \caption{Results of the expectation value and quantum fluctuations of $\hat{h}_{\alpha_k}, k = 1, \dots , |L(\gamma)|$ in the solution of $\hat{C}$ computed by the neural network. The results are truncated to the third decimal point. The fluctuations for the $\jmax = 7, 8$ case are omitted due to numerical arithmetic precision errors.}
    \label{tab:min_loop_holonomy_results}
\end{table}
\newline
As shown in Table \ref{tab:min_loop_holonomy_results}, the expectation values of the minimal loop holonomies in the state obtained by the neural network after the minimisation process tend to get closer to 1 the higher the $\jmax$. We note that the error bars in the values shown in the Table above have been omitted only for brevity and in fact, the error values were less than 5\% of the result in all cases. This happens even at a low $\jmax$, and hence indicating that one can already see that one approaches the behaviour of the continuum theory relatively quickly. This also further indicates that the flatness enforced by the curvature constraint becomes more so enforced for higher $\jmax$ as the frustration between the $\hat{F}$ and $\hat{G}$ parts of the constraint get alleviated. Looking at the quantum fluctuations, one sees that they tend to get closer to zero as $\jmax$ gets larger. We note that for high $\jmax$, the two terms in equation (\ref{eqn:q_fluctuations}) have very close numerical values and unavoidable numerical arithmetic precision errors can lead the result to be negative, albeit analytically this should not be the case. For this reason we opt not to include the results for $\jmax = 7, 8$ in Table \ref{tab:min_loop_holonomy_results}. Nevertheless, one can infer with some degree of confidence that the quantum fluctuations for higher charge cutoffs die out steadily as for accurate simulations one does not encounter values for the quantum fluctuations at a higher $\jmax$ which are larger than that of a lower $\jmax$, especially if the accuracy achieved during the simulations is high.

\subsection{Contributing basis states}
\label{sec:contributingStates}
In the previous sections, the state satisfying the constraint $\hat{C}$ was obtained for different $\jmax$ values using NNQS. In this process, we obtained the amplitude for every basis in our system. Knowing that, for small systems (small enough to fit into memory during runtime), one can examine these amplitudes. In this section, we will look at what basis states contribute to the solution. 
\newline
First, we will define by a \textit{strongly} contributing basis state a basis state which has an amplitude which exceeds a chosen cutoff which is taken as $\mu + \epsilon\sigma$, where $\mu$ is the mean of all amplitudes, $\epsilon$ is a specified threshold often taken to be $10^{-3}$ and $\sigma$ is the standard deviation of the amplitudes. Visualising the amplitudes of the solution obtained via the network in the $\jmax = 2$ cutoff, we see that there are in total 19 strongly contributing basis states as shown in Figure \ref{fig:spin2_contributingStates}.
\begin{figure}[h]
    \centering
    \includegraphics[scale=0.45]{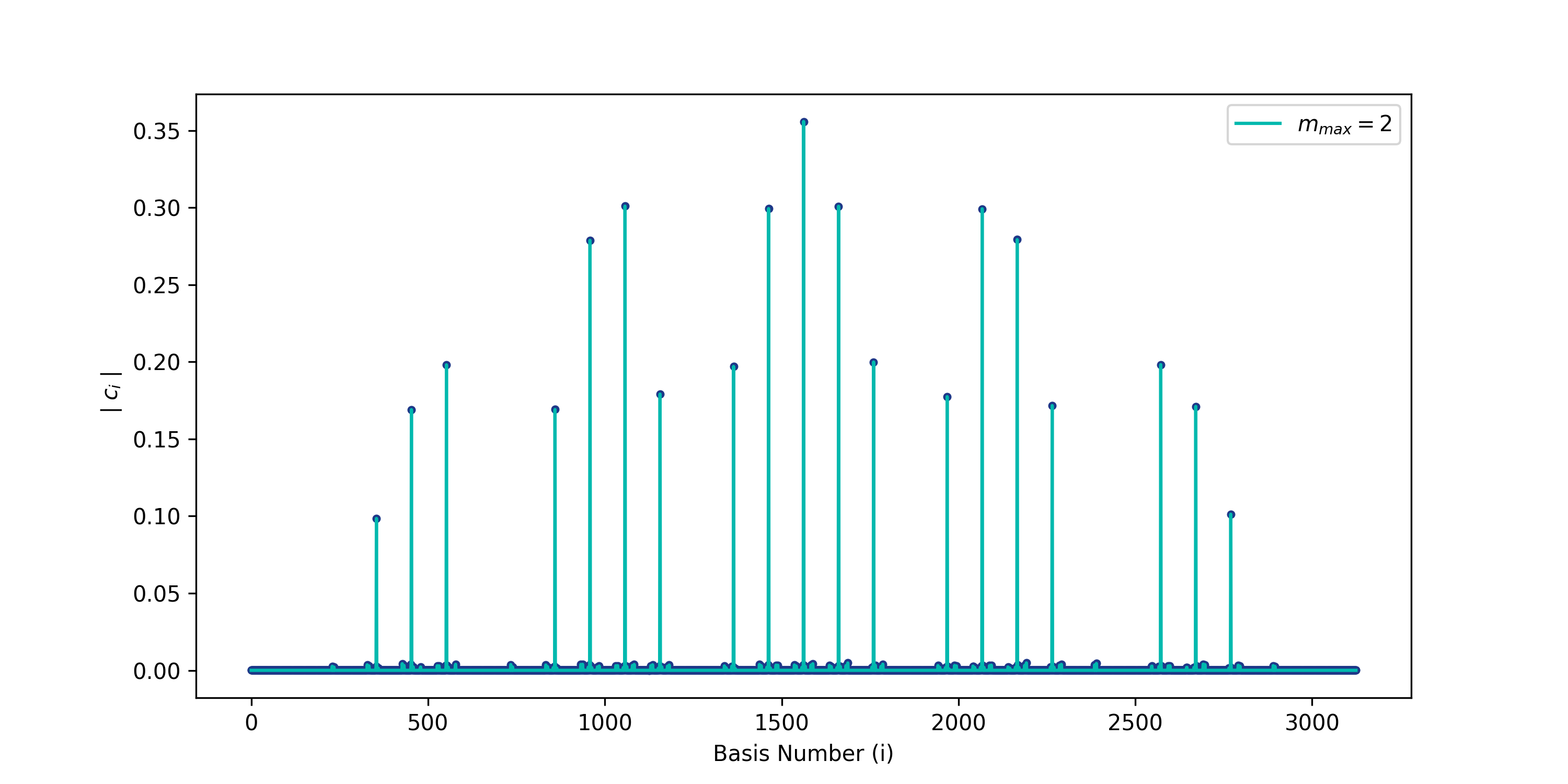}
    \caption{The amplitudes of the 3125 basis states obtained using the neural network in the $\jmax = 2$ cutoff. We see that only 19 of them (0.608 \% of the entire space) are strongly contributing.}
    \label{fig:spin2_contributingStates}
\end{figure}
\newline
Once the amplitudes of those strongly contributing basis states are identified, one can see precisely which basis states are those. It was observed that the strongly contributing states are all gauge invariant states. However, not all gauge invariant states contribute \textit{strongly}. Given the orientation of our graph, then the dimension of the gauge invariant Hilbert space would be $\dim\hilb_{\Tilde{\gamma}}^{G} = (2\jmax + 1)^2 = 25$ for $\jmax = 2$. On the contrast, only 19 of those contribute strongly in the solution obtained using the neural network. One can investigate precisely which of gauge invariant basis states are strongly contributing, however this is now graph orientation dependent, and thus is not conducted in this work.
\newline
Now we turn our focus into yet another test on the validity of the numerical simulations. For the work done so far, we considered a 2-L graph due to it being non-trivial. The triviality here is due to the fact that for a 1-L graph, the Gau{\ss} constraint is already perfectly imposed. However, if one considers a 1-L graph composed of three edges and three vertices, then the states describing our analytical model take the form
\begin{equation}
\label{eq:1LWavefunction}
    \Psi = \sum_{n = -\jmax}^{\jmax} c_n \ket{h_\alpha^{n}},
\end{equation}
where $\ket{h_\alpha^{n}}$ is a minimal loop holonomy eigenstate with charge $n$. In this simple case, the Gau{\ss} constraint is fully satisfied and the only remaining constraint is the curvature. Hence, the master constraint takes the form
\begin{equation}
    \hat{C} = (\hat{h}_\alpha - \one)(\hat{h}_\alpha^\dagger- \one) = -\hat{h}_\alpha - \hat{h}_\alpha^\dagger + 2\one.
\end{equation}
This constraint acting on the states shown in equation \eqref{eq:1LWavefunction} would result in
\begin{equation}
    \hat{C}\Psi = \sum_{n = -\jmax}^{\jmax}(2c_n - c_{n-1} - c_{n+1}) \ket{h_\alpha^n}.
\end{equation}
Written in matrix form, one can see that this is an almost tridiagonal matrix. Specifically, this is similar to a tridiagonal Toeplitz matrix where the entries on the diagonal are all equal to $2$, and the entries on the super and sub-diagonal are all equal to $-1$. We denote the diagonal entries by $d$ and the sub and super-diagonal entries by $d_{n-1}$ and $d_{n+1}$ respectively. Such matrices have a known eigenvalue spectrum and the corresponding set of eigenvectors. Specifically, the eigenvalues of an $N\times N$ tridiagonal Toeplitz matrix take the form 
\begin{equation}
    \lambda_k = d - 2\sqrt{d_{n-1}d_{n+1}} \cos\left(\frac{k\pi}{N+1}\right),
\end{equation}
where $i=1, \cdots , N$. Furthermore, the $l$\textsuperscript{th} component of the eigenvector $\Vec{v}_k$ corresponding to the eigenvalue $\lambda_k$ is given by
\begin{equation}
    v_{kl} = A \sin\left(\frac{kl\pi}{N + 1}\right),
\end{equation}
where $A$ is just a constant. Now, we consider a $\jmax = 2$ case. The point of interest is to verify the decay rate of any given amplitude in the solution when considered over different $\jmax$ cutoffs. To do this, we can take the norm squared of the ground state of a tridiagonal Toeplitz matrix after which it can be easily seen that $A = 1 / \sqrt{2\jmax + 1}$. In our case, this would mean that given a contributing basis state, its amplitude should contribute by a value which decays as $1 / \sqrt{2\jmax + 1}$ for higher $\jmax$ cutoff. This can be seen in Figure \ref{fig:decayRateOneLoop} where the chosen basis state was one with the charges on all three edges being zero.
\begin{figure}[h]
    \centering
    \includegraphics[scale=0.35]{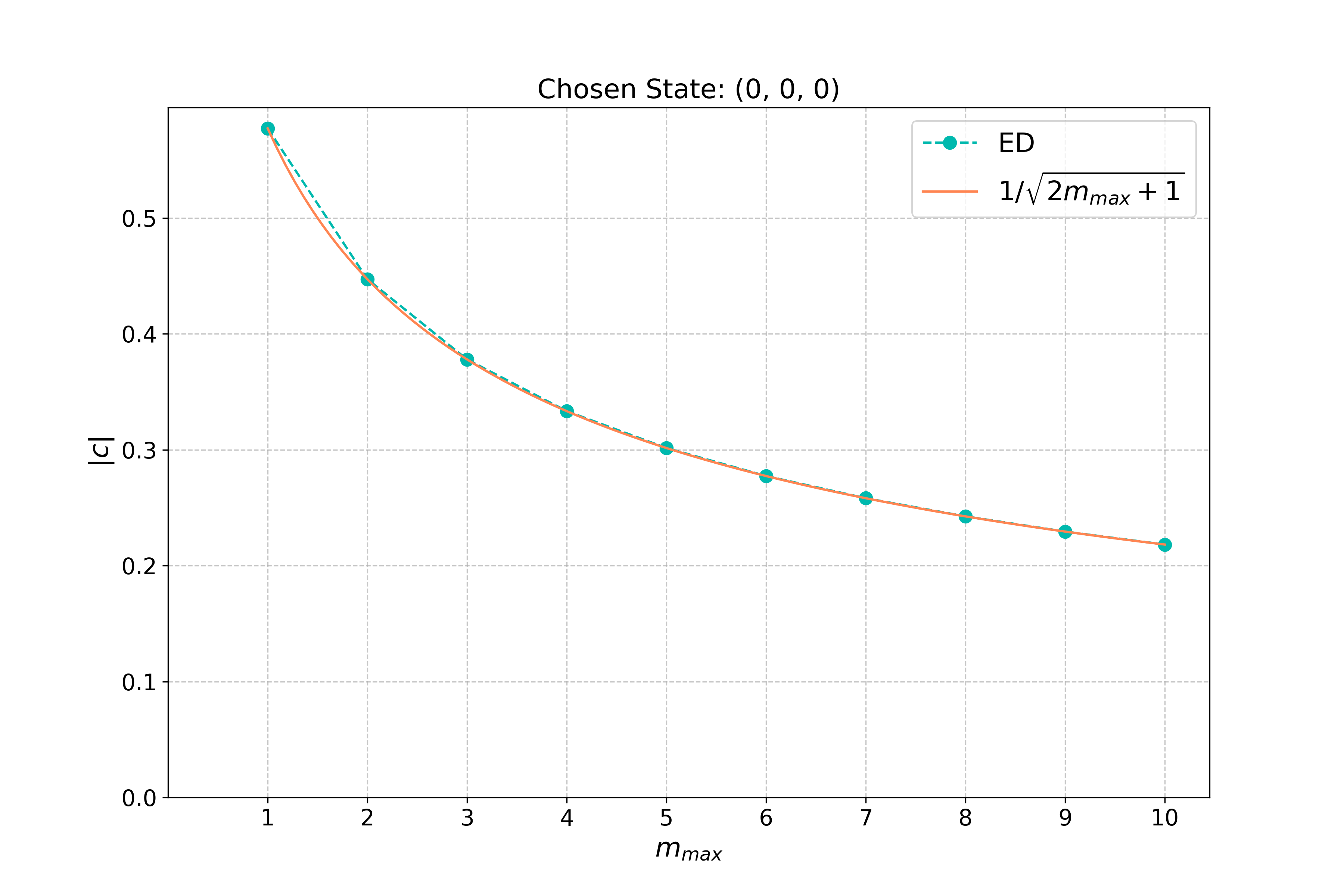}
    \caption{The decay rate of the amplitude of the state (0, 0, 0) in different $\jmax$ cutoffs is shown in light blue. These results were obtained from exact diagonalisation of the constraint which was defined on a 1-L graph. In orange, the decay rate obtained analytically is shown.}
    \label{fig:decayRateOneLoop}
\end{figure}
\newline
Figure \ref{fig:decayRateOneLoop} shows the amplitude of the contributing basis state (0, 0, 0) in different $\jmax$ cutoffs for the $\hat{C}$ defined on a 1-L graph. In light blue, we see the values of this amplitude obtained via exact diagonalisation methods. The orange line indicates the expected decay rate as dictated from the analysis of the ground state of a tridiagonal Toeplitz matrix conducted above. We see that indeed the amplitudes do follow the expected decay rate of $1 / \sqrt{2\jmax + 1}$ in the simple 1-L graph where the Gau{\ss} constraint is fully satisfied.
\newline
Conducting a similar study for a 2-L graph is not as straightforwards as the solution takes a more complex form as a result of the master constraint having a more complicated structure due to the Gau{\ss} constraint being not fully satisfied. In this case, it is difficult to get an exact analytical expression for the expected decay rate. However, using the analytical expression of the 1-L graph case as a guide, we see the results shown in Figure \ref{fig:cs_decay} where the decay of the amplitudes in a 2-L graph is nevertheless compared to the decay rate of $1 / \sqrt{2\jmax + 1}$ for a 1-L graph.
\begin{figure}[h]
\centering
\begin{subfigure}{.5\textwidth}
  \centering
  \includegraphics[width=1.05\linewidth]{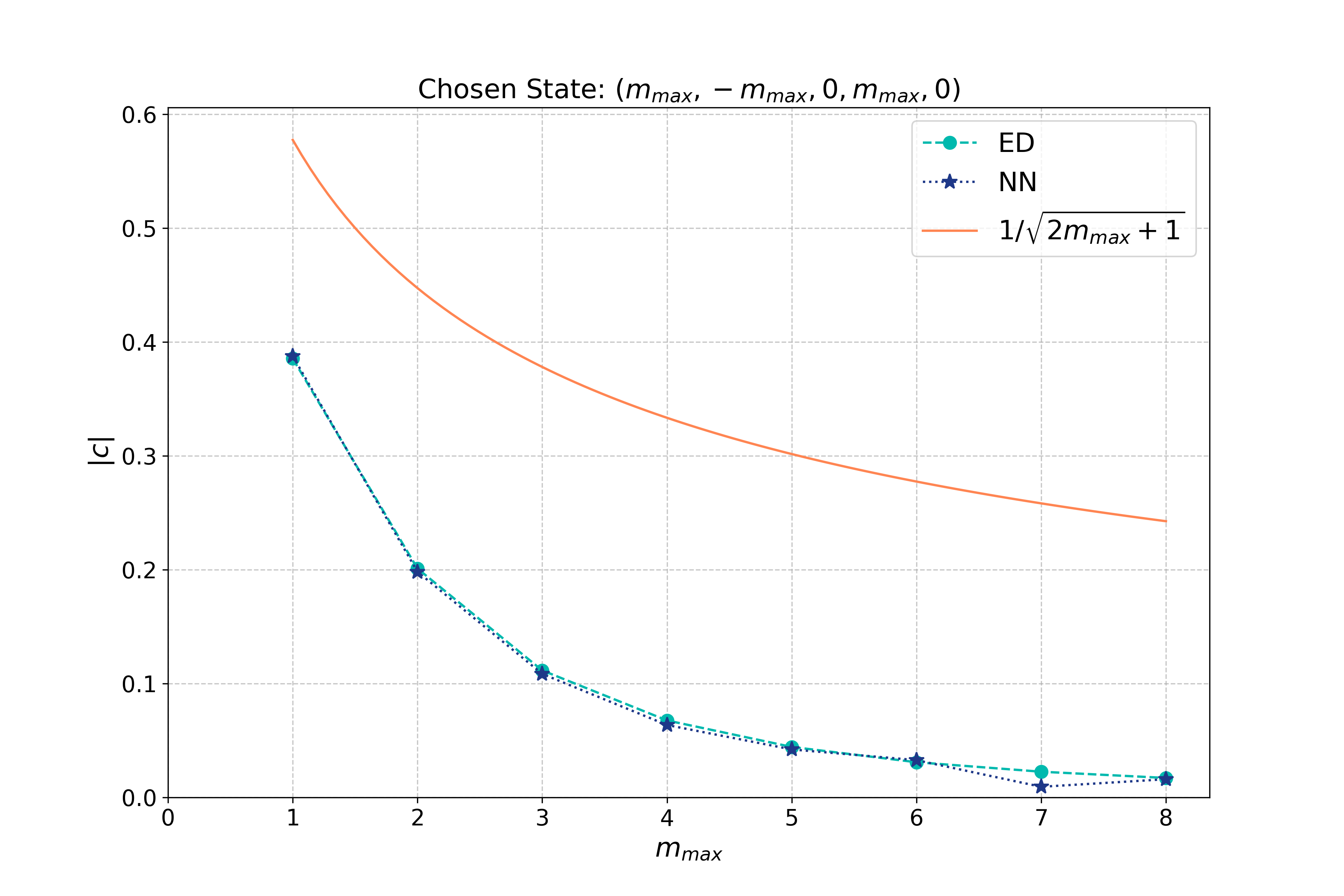}
\end{subfigure}%
\begin{subfigure}{.5\textwidth}
  \centering
  \includegraphics[width=1.05\linewidth]{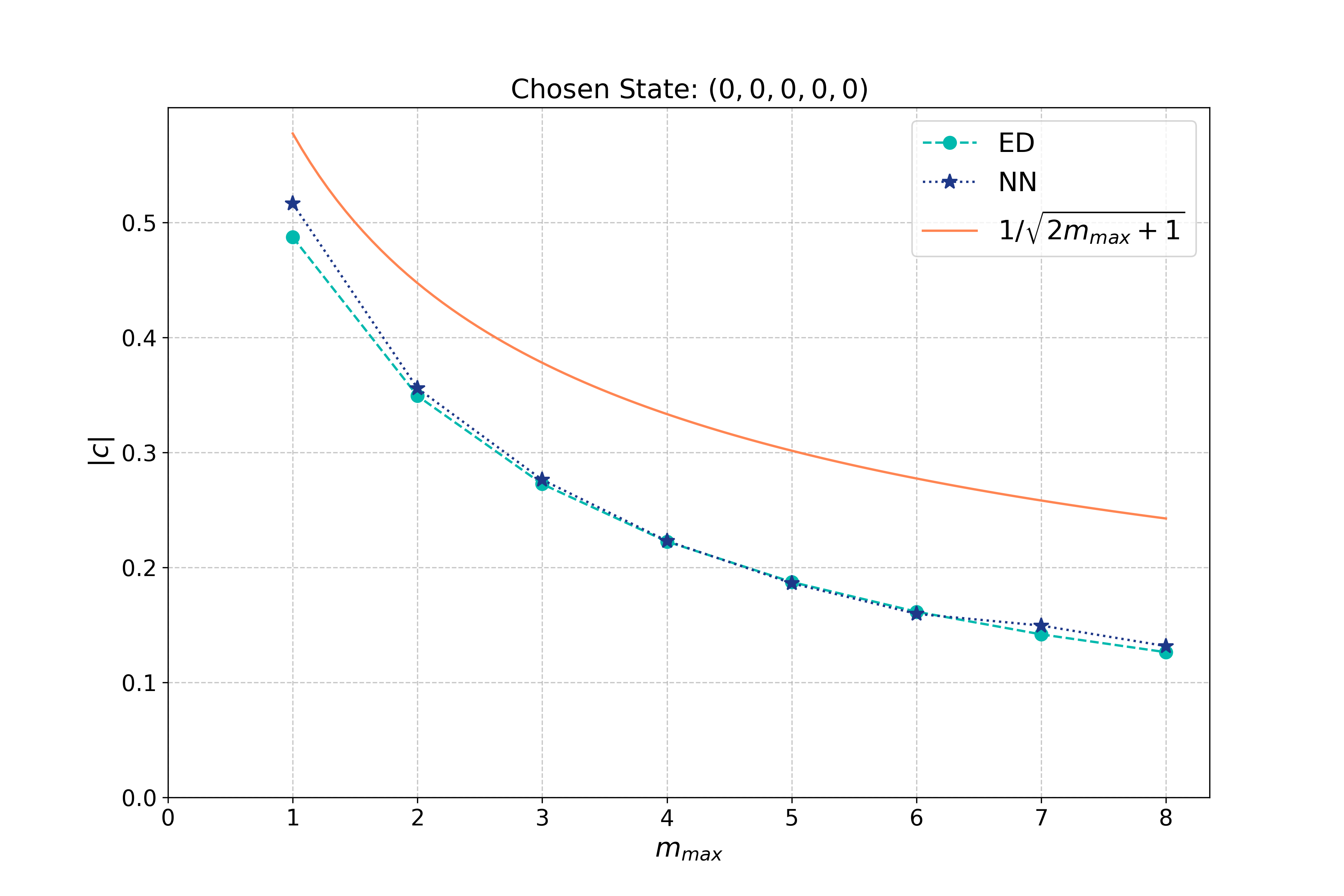}
\end{subfigure}
\caption{The decay rate of a chosen gauge invariant state in the solution of $\hat{C}$ for a 2-L graph in different $\jmax$ cutoffs is shown. The coefficients obtained from exact diagnalisation methods are shown in light blue and the coefficients obtained from the neural network are shown in violet. The orange line shows the decay rate of the amplitudes for the 1-L graph case.}
\label{fig:cs_decay}
\end{figure}
\newline
Figure \ref{fig:cs_decay} shows the decay rate of a chosen amplitude in the solution where the model is defined over a 2-L graph. On the left, the chosen state was one where one loop is \quotes{turned on} and the other loop is not (that is, one loop has zero charge on its edges which are not common to the other loop). On the right, the chosen state was one with all zero charges on all edges. Both the states are gauge invariant states which also exist in the 1-L graph. Their amplitudes are obtained using exact diagonalisation as well as using the neural network. The neural network results shown in violet match to a great degree the exact diagonalisation results shown in light blue. This further reflects the accuracy obtained in each of those simulations. Moreover, the solid orange line indicates the expected analytical decay rate for the 1-L graph case, as the similar expression for the 2-L graph case is more difficult to obtain. As seen in the figure, the amplitudes qualitatively follow the same decay rate. Further calculations show that the decay rate is faster than $1/\sqrt{2\jmax + 1}$ but slower than $1/(2\jmax + 1)$. 
\newline
It is a non-trivial task to get an analytical expression for the decay rate for an arbitrary N-L graph. Nevertheless, we have shown that the amplitudes of contributing states decay as analytically expected in the 1-L graph case and one does not see any reason why this should not be the case for N-L graphs where $(N\geq 2)$. This serves as yet another confirmation of the validity of the simulations.

\subsection{Entropy and R\'{e}nyi entanglement entropy in the solution of $\hat{C}$}
\label{sec:entropy}
In this section we present a brief discussion regarding the entropy and entanglement entropy in the solution. Note that the dimensions of the Hilbert space limit us from implementing von Neumann entropy for arbitrary $\jmax$ due to the growing amount of memory to be allocated for the density matrix. In this section we will present the Shannon entropy of the solution at different $\jmax$ cutoffs as well as some comments on the von Neumann entropy for a handful of different $\jmax$ simulations. Lastly, the entanglement entropy in the solution computed from the R\'{e}nyi entanglement entropy.
\newline
To begin, Shannon entropy, for a given set of probabilities, take the form
\begin{equation}
    S = -\sum_{i}p_i \log p_i.
\end{equation}
In our case, $p_i = |c_i|^2$. Since we have all the amplitudes, one can then it is straightforward to compute the Shannon entropy of the solution. In Table \ref{tab:entropy}, the results are summarised for different $\jmax$ cutoffs.
\begin{table}[h]

    \centering
    \begin{tabular}{c|cccccccc}
        \rowcolor{lightergreen!20}
        $\jmax$ & 1 & 2 & 3 & 4 & 5 & 6 & 7 & 8 \\
        \hline
        $S_{(\mathrm{NN})}$ & 0.33771 & 0.34421 & 0.34394 & 0.34302 & 0.34356 & 0.34783 & 0.33850 & 0.33881 \\
        $S_{(\mathrm{ED})}$ & 0.35018 & 0.34550 & 0.34318 & 0.34268 & 0.34327 & 0.34428 & 0.34541 & 0.34652 \\
    \end{tabular}
    \caption{The normalised Shannon entropy of the solution for different $\jmax$. The entropy is computed for the solution obtained via exact methods as well as the neural network.}
    \label{tab:entropy}
\end{table}
\newline
Table \ref{tab:entropy} shows the normalised Shannon entropy for the solution obtained via exact methods, denoted by $S_{(\mathrm{ED})}$ as well as using the neural network, denoted by $S_{(\mathrm{NN})}$. The entropies are normalised with respect to the maximum value the entropy can have in the respective Hilbert space. That is, if all the amplitudes are the same (thus \quotes{having the same probability}), the Shannon entropy reaches its maximum value of
\begin{equation}
    S = -\sum_{i}^{N} \left(\frac{1}{N}\log\frac{1}{N}\right) = \log N, 
\end{equation}
where $N=\dim\hilb_{\Tilde{\gamma}}$. This value is then used to normalise the Shannon entropy which in turn allows us to compare it across different Hilbert spaces with different $\jmax$. The minimum entropy is achieved when only one basis state contributes, where then $S = 0$. Analytically, the Shannon entropy is expected to be $\log\dim\hilb_{\Tilde{\gamma}}^{G} / \log\dim\hilb_{\Tilde{\gamma}}$, which in the given orientation of the graph is $\log(2\jmax + 1)^2 / \log(2\jmax + 1)^5 = 0.4$. What we see then from Table \ref{tab:entropy} is that the Shannon entropy, for both the exact methods and the neural network solutions, are very close to what one would expect them to be. 
\newline
The von Neumann entropy was computed only for $\jmax = 1, 2$ and $3$ due to being limited by the computational costs required for high $\jmax$ calculations. It was observed that the von Neumann entropy takes a very small value (orders of magnitude of $10^{-13}$). This rather small, almost zero, value for the von Neumann entropy goes as yet another consistency check in the books as one has a pure state at hand and as such one expects the von Neumann entropy to be zero.
\newline
The last part of this work concerns the entanglement entropy in the solution. Entanglement entropy is physical a quantity that measures the degree of quantum entanglement in an interacting many-body system. The manner in which the entanglement entropy scales can give insight into universal information of the system at hand. This can be used to characterise quantum phases of the system, and thus phase transitions in interacting models.
\newline
In this work, we aim to merely demonstrate the ability to compute such physically relevant quantities. We will consider the R\'{e}nyi entanglement entropy. Specifically, we will use the second R\'{e}nyi entanglement entropy which takes the form
\begin{equation}
    S_2(A) = -\log_2 \tr_A \rho^2.
\end{equation}
Here, $\tr_A$ indicates the partial trace over the partition $A$ of the system and $\rho = \ket{\Psi}\bra{\Psi}$ is the density matrix formed from the variational solution. We can consider the partition $A$ to be the entire graph, and hence we compute the R\'{e}nyi entanglement entropy for the entire state rather than a reduced state confined to a subregion of the system. It was observed that $S_2$ is almost zero, of magnitude $\approx 10^{-16}$ which is as in the case of the von Neumann entropy due to numerical precision errors. This further affirms that one has a pure state at hand as $S_2$ would vanish for a pure state. 
\newline
Furthermore, one can partition the system in different manners. We will opt to partition the system in 4 ways, each time computing $S_2$ accordingly. If we identify the partitions by the edges in the graph $\gamma$, then partitions are as follows
\begin{align}
    P_1 = \lbrace (0, 1), (0, 2), (0, 3) \rbrace \quad &, \quad P_2 = \lbrace (1, 2), (0, 2), (3, 2) \rbrace, \nonumber\\
    P_3 = \lbrace (0, 3), (3, 2) \rbrace \quad &, \quad P_4 = \lbrace (0, 1), (1, 2) \rbrace.
\end{align}
This can also be further elucidated upon by referring to the figure below whereby the partitions are shown on the graph $\gamma$. We note that edges included in any shaded partition region are meant to be included in that partition. 
\begin{figure}[h]
    \centering
    \includegraphics[scale=0.3]{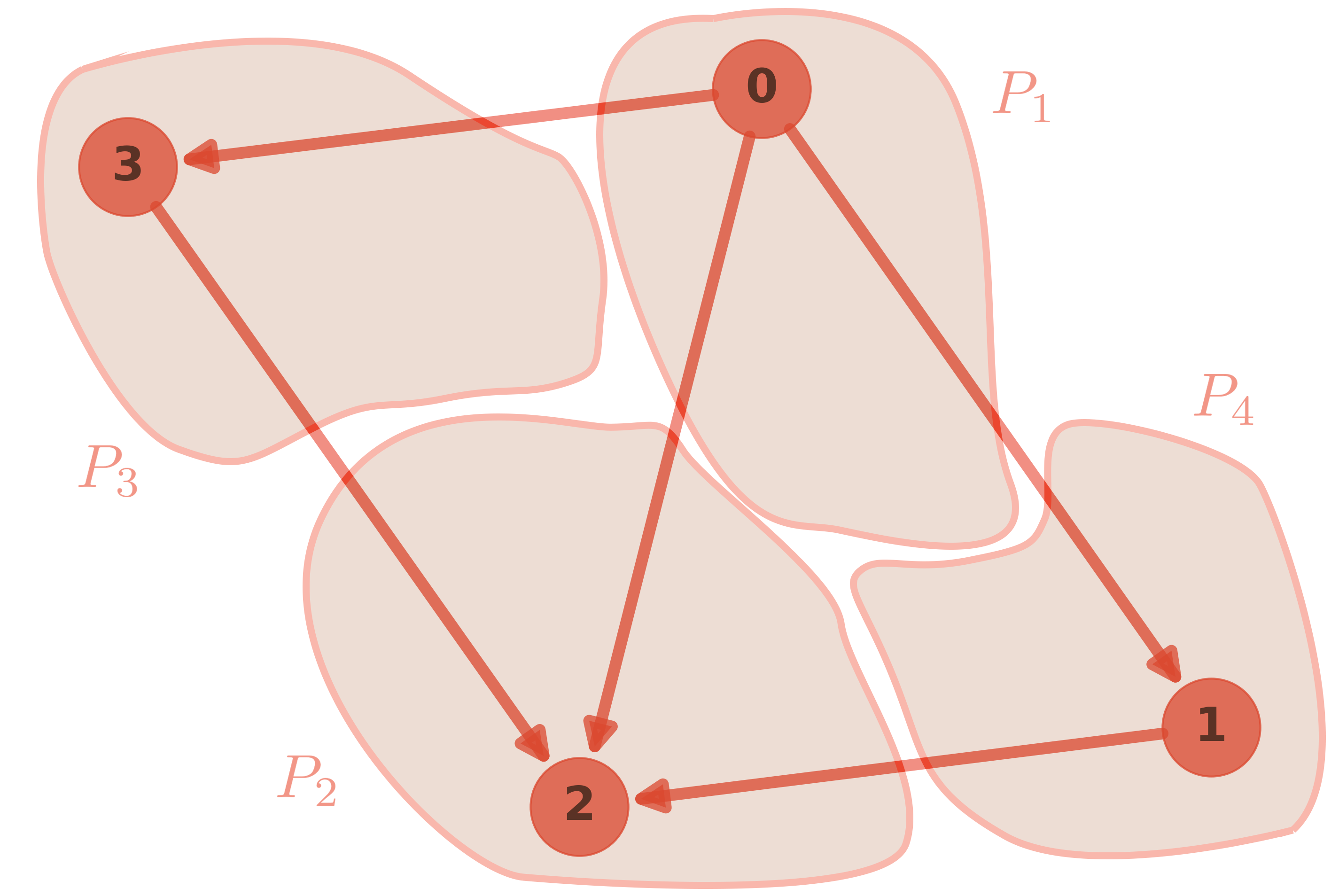}
    \caption{The four partitions chosen to compute the second R\'{e}nyi entropy $S_2$. As shown, two of them include 3-valent vertices ($P_1$ and $P_2$ respectively) while the remaining two are 2-valent vertices ($P_3$ and $P_4$).}
    \label{fig:partition}
\end{figure}

\noindent The reason for the partition is to isolate the different valent vertices. For example, $P_1$ includes all the edges connected to the 3-valent vertex labeled by 0 while $P_4$ is a partition which has all the edges attached to the 2-valent vertex labeled by 1 (see Figure \ref{fig:partition}).
\newline
It was observed that $S_2$ for any partition which included a 3-valent vertex had a higher value than when considering partitions with no 3-valent vertices. For example, for $\jmax = 1$, then $S_2(P_1) = 2.47 \pm 0.44$ while $S_2(P_4) = 1.25 \pm 0.24$. Furthermore, these values grow significantly for higher $\jmax$ where for $\jmax = 5$ then for the same partitions mentioned it was observed that $S_2(P_1) = 26.79 \pm 0.12$ while $S_2(P_4) = 4.06 \pm 0.81$. 
\newline
The findings obtained tentatively show that the second R\'{e}nyi entanglement entropy in the solution of the master constraint is related to the valence of the vertices. Determining whether the entanglement entropy follows an area or volume law is at the moment not possible as one does not have concrete quantum geometric observables in this quantum model with this simple gauge group and will be left for future work. We do note that the purpose of this brief study is not an extensive one. Rather, it is to demonstrate the capabilities of the softwares used and developed in this work. Therefore, conclusions from these results are only tentative and an exhaustive study of the entanglement entropy in relation to different partitions of the system is left for future work.

\section{Discussion and outlook}
In this work, we have considered a toy gravitational model in which we have demonstrated that it can be fully numerically solved using the neural network quantum state ansatz. Motivated by 3+1 gravity in the weak coupling limit, we have chosen the to consider a BF-theory in 3-dimensions with a Lie group of $\Uone$ in which the constraints took the form
\begin{equation}
    \extd\omega^I = 0 \quad , \quad \extd e^I = 0,
\end{equation}
where the $\omega^I$ and $e^I$ denote the spin connection and the orthonormal frame of 1-forms. This $\Uone$ BF-theory was quantised using loop quantum gravity methods. In doing so, one gains the advantage of being able to define the model on a graph which in turn allowed for an implementation of the $\Uone$ BF-theory on a computational basis. Each edge of the graph carried a charge $m \in \Z$ corresponding to the representation of the $\Uone$ holonomy on the given edge. In efforts to make the model more computationally tractable, the representation algebra of the labels of the holonomies was deformed to restrict the allowed charges to fall within an admissible set of charge numbers $M \subseteq \Z$. Due to this, we obtained what we called a quantum $\Uqone$ BF-theory.
\newline
The curvature constraint imposed flatness over the minimal loops of the graph and the Gau{\ss} constraint ensured charge conservation at every vertex. We have then considered a master constraint for the model which was expressed as
\begin{equation}
    \hat{C} = \sum_{\alpha \in L(\gamma)}\left(\hat{h}_\alpha - \one\right)\left(\hat{h}_\alpha^\dagger - \one\right) + \sum_{v \in V(\gamma)}\left(\sum_{e \in E_i(\gamma)} \hat{N}_{e} - \sum_{e' \in E_o(\gamma)}\hat{N}_{e'}\right)^2.
\end{equation}
The computational tools used in this work require the Hilbert space to be defined on vertices. Since the edges carry the charges, this consequently led to modelling the physical system on the dual graph in the computational basis. In there, the wave-function was that of a many-body system, where the total Hilbert space $\hilb_{\Tilde{\gamma}}$ over the entire dual graph was decomposed as a tensor product of Hilbert spaces $\hilb_{\Tilde{v}}$ defined over the dual vertices $\Tilde{v} \in \Tilde{\gamma}$. The basis states of $\hilb_{\Tilde{v}}$ were labeled by the charge numbers $m$.
\newline
The physical system was numerically solved using standard exact numerical techniques (Lanczos algorithm) to find the eigenvalues and eigenvectors of the master constraint $\hat{C}$. We arrived at the state with \quotes{lowest energy} (if one understands the constraint as the Hamiltonian operator) which is the solution for $\hat{C}$. Furthermore, the NNQS ansatz was employed to solve this model and proved to be of rather remarkable accuracy and efficiency. We showed that as the $\jmax$ cutoff was relaxed, the quantum fluctuations of the minimal loop holonomies died out and the non-commutativity between the curvature and Gau{\ss} constraint approached zero. It is then shown that one approaches the continuum limit quickly, even for relatively low $\jmax$. We showed that the basis states which contribute to the solution of the constraints are all gauge invariant states. Furthermore, the amplitudes of such states were verified to follow the expected decay rate as one approaches the continuum limit. The Shannon entropy of the solution was computed in which we showed that the results in every considered different $\jmax$ simulation followed the analytically derived results. Furthermore, the von Neumann entropy computed for the solutions of a handful of different $\jmax$ simulations further supported the fact that one deals with pure states. Lastly, we concluded the study by looking into the second R\'{e}nyi entanglement entropy $S_2$ where the system was partitioned to several partitions which included the edges surrounding the 2 and 3-valent vertices. It was observed that it appears as though the entanglement entropy is related to the valence of the vertices. However, such a statement is only tentative and due to the lack of geometric observables, the law which the entanglement entropy follows in the solution remains to be explored in future work.
\newline
This work stands as a proof of concept of the application of NNQS in gravity inspired models. It is important to address not only the results but the points of possible improvement in this work. For one thing, the architecture of the network used in this work for the NNQS ansatz was deliberately chosen to be not $\Uone$ gauge invariant. Such invariance under certain gauge groups for NNQS exist, and it would be interesting to compare the performance of the different networks. This especially becomes interesting in large graphs or high $\jmax$ cutoffs. Upcoming work \cite{Sahlmann:2024kat} shows that this also becomes very crucial when considering models with a different gauge group such as $\Uone^3$ instead of $\Uone$. Another point would be perhaps considering more physically rich toy models which include matter fields as a test-bed for future realistic models. 
\newline
Ultimately, the method presented here has to be generalized to the case of the non-Abelian group SU(2). Several challenges are expected on this path: 
\begin{itemize}
    \item To obtain a mathematically consistent cutoff, one would probably work with SU(2)$_q$ and its representations. 
    \item For a given maximal spin $j_{max}$, the Hilbert space dimension is bigger than in the corresponding $\Uqone^3$ case, as all irreps with $j\leq j_{max}$ can contribute. This is a performance concern. 
    \item Some or, depending on the formulation, all of the quantum states reside in nontrivial intertwiner spaces, so this structure has to be suitably described within the framework of NetKet. 
    \item The volume operator, an important ingredient in the Hamilton constraint for loop quantum gravity is quite complicated. We have successfully demonstrated the use of a volume operator in the $\Uqone^3$ case \cite{Sahlmann:2024kat}, but the one for SU(2)$_q$ is more complicated. 
\end{itemize}
In any case, we hope that this work is the first step in using modern numerical approaches in the domain of quantum gravity.

\ack
The authors are grateful for questions and comments by S.\ Steinhaus which helped improve the content and presentation. Conversations with B.\ Dittrich and T.\ Thiemann are also gratefully acknowledged. H.S. acknowledges the contribution of the COST Action CA18108. The authors gratefully acknowledge the scientific support and HPC resources provided by the Erlangen National High Performance Computing Center (NHR@FAU) of the Friedrich-Alexander-Universität Erlangen-Nürnberg (FAU). The hardware is funded by the German Research Foundation (DFG). The numerical simulations were carried out using Netket \cite{CARLEO2019100311}.

\section*{References}

\clearpage
\section*{Appendices}
\renewcommand{\thesubsection}{\Alph{subsection}} 

\subsection{Brief outline of the optimisation process}
\label{app:A}
In \cite{Carleo:2017nvk}, it was first shown that a RBM can be used as a variational ansatz for a quantum state. For a system of $N$ spin particles, the amplitude of the many-body wave-function takes the form
\begin{equation}
    \Psi_{K}(\underline{\sigma}; a, b, w) = e^{\sum_i a_i \sigma^z_i}\prod_{j = 1}^{K}2\cosh\theta_j (\underline{\sigma}) \,\, , \,\, \theta_{j}(\underline{\sigma}) = b_j + \sum_{i}^{N} w_{ij}\sigma^z_i,
\end{equation}
where $a, b$ and $w$ are complex valued network parameters and $\underline{\sigma}$ denote the basis elements of the space. The training of the ansatz is a form of unsupervised training. Thus, one can train the ansatz with variational Monte-Carlo (VMC) \cite{Carleo:2017nvk,gubernatis_kawashima_werner_2016}. This can be done by observing that the expectation value of some given operator $\hat{O}$ is a function of the network parameters. Therefore, by computing the $\expect{\hat{O}}$, one can update the network parameters such that $\expect{\hat{O}}$ (the cost function in machine learning terms) is minimised. In what follows, we outline this process.

\subsubsection{Expectation values via Monte-Carlo}

Given that one deals with Hilbert spaces of very large dimensions, it would be inefficient and sometimes infeasible to compute the expectation value of an operator explicitly. Therefore, Monte-Carlo methods need to be employed. In such a case, the expectation value of an operator is given by \cite{PhysRevResearch.2.023358}
\begin{equation}
\label{eq:MCExpect}
    \expect{\hat{O}} = \sum_{i}^{N_{MC}} \frac{O_{loc}(\underline{\sigma}_i)}{N_{MC}},
\end{equation}
where here $N_{MC}$ is the total number of samples in the Monte-Carlo process (number of chains $L$ $\times$ number of samples per chain $S_L$), $\underline{\sigma}_i$ are the samples in the $i$\textsuperscript{th} chain and the \textit{local estimator} $O_{loc}$ is defined by \cite{PhysRevResearch.2.023358}
\begin{align}
    O_{loc}(\underline{\sigma}) & = \sum_{\underline{\sigma}'} \langle \underline{\sigma} |\hat{O}| \underline{\sigma}' \rangle \frac{\Psi_K(\underline{\sigma}')}{\Psi_K (\underline{\sigma})} \\
    & = \sum_{\underline{\sigma}' \text{ s.t. } \langle \underline{\sigma} |\hat{O}| \underline{\sigma}' \rangle \neq 0} \langle \underline{\sigma} |\hat{O}| \underline{\sigma}' \rangle \exp(\log \Psi_K (\underline{\sigma}') - \log \Psi_K(\underline{\sigma})).
\end{align}
The sum in the second line of the equation above is only on basis elements which have non-zero contributions hence reducing the computational demand. Given a number of samples drawn from a distribution $|\Psi_K(\underline{\sigma})|^2 / \sum_{\underline{\sigma}} |\Psi_K (\underline{\sigma})|$, the average energy over the samples gives an unbiased estimate of the energy. It can be easily seen that the computational cost is now dependent on how sparse $\hat{O}$ is instead of requiring the full dense matrix for $\hat{O}$.
\newline
Since one considers $L$ chains, that means one has $L$ evaluations for the local estimator $O_{loc}$. Error bars can then be computed as 
\begin{equation}
    \text{Err}. = \sqrt{\frac{\text{Var}(\lbrace O_{loc} \rbrace)}{N_{MC}}},
\end{equation}
where $\lbrace O_{loc} \rbrace$ is the set of $L$ evaluations mentioned prior.

\subsubsection{Generating samples in the chains}
The sampling process of samples to form chains is, as mentioned, done using some Monte-Carlo sampling. Specifically, one uses Markov Chain Monte-Carlo (MCMC) \cite{Carleo:2017nvk,Dellaportas2003} sampling. The set of configurations $\underline{\sigma}_1, \underline{\sigma}_2, \dots$, referred to as chains, are constructed using some Metropolis-Hasting type sampler or an exact sampler if the space is small enough. This sampler proposes a new sample, based on the current sample, using a specified transition kernel. After every iteration, it is either accepted or rejected based on an acceptance criteria, specifically \cite{Carleo:2017nvk}
\begin{equation}
    P(\underline{\sigma}_{k + 1} = \underline{\sigma}_{\mathrm{prop}}) = \min\left(1, \left|\frac{\Psi_K (\underline{\sigma}_{\mathrm{prop}})}{\Psi_K (\underline{\sigma})}\right|^2 \right),
\end{equation}
and the sample is then the set of configurations (chains) of the Markov chain down-sampled at an interval. Once this is done, the expectation value can be computed as outlined prior. What remains then is updating the network parameters to minimise the expectation value.

\subsubsection{Updating the network parameters}

In this work, the network parameters are updated in a two-step process. First, using a \textit{preconditioner} that transforms the gradient of the cost function, in this case $\expect{\hat{O}}$, in order to improve convergence properties before passing it to an optimiser such as a stochastic gradient descent (SGD) optimiser. In this work, stochastic reconfiguration (SR) is used \cite{Sorella_2007}. In such a method, variational derivatives with respect to the $M$\textsuperscript{th} network parameter are written as \cite{Carleo:2017nvk}
\begin{equation}
    \mathcal{O}_M(\underline{\sigma}) = \frac{1}{\Psi_K (\underline{\sigma})} \partial_{\mathcal{W}_M} \Psi_K (\underline{\sigma}),
\end{equation}
where $\mathcal{W}$ denote the weights of the network. SR updates, at the $p$\textsuperscript{th} iteration, the network parameters such that \cite{Carleo:2017nvk}
\begin{equation}
    \mathcal{W}_{p + 1} = \mathcal{W}_p - \gamma(p) S^{-1}(p) F(p),
\end{equation}
where $\gamma(p)$ is a scaling parameter, $S(p)$ is a (Hermitian) covariance matrix defined as \cite{Carleo:2017nvk}
\begin{equation}
    S_{K K'}(p) = \expect{\mathcal{O}^*_K \mathcal{O}_{K'}} - \expect{\mathcal{O}_K} \expect{\mathcal{O}_{K'}},
\end{equation}
and the forces $F(p)$ are given by \cite{Carleo:2017nvk}
\begin{equation}
    F_K(p) = \expect{O_{loc} \mathcal{O}_K^*} - \expect{O_{loc}}\expect{\mathcal{O}_K^*}.
\end{equation}
The next step is to compute the variational derivative of the cost function $\expect{\hat{O}}$ by computing the variational derivatives of equation \eqref{eq:MCExpect} after which the network parameters are updated such that \cite{PhysRevResearch.2.023358}
\begin{equation}
\label{eq:SGDUpdate}
    \mathcal{W}_{p + 1} = \mathcal{W}_{p} - \eta \partial_{\mathcal{W}_{p}}\expect{\hat{O}},
\end{equation}
if one uses SGD, where $\eta$ is the \textit{learning rate}. All derivatives are computed using automatic differentiation \cite{Zhang_2023}. In this work, we opted to use the Adam (Adaptive Momentum) optimiser \cite{kingma2017adammethodstochasticoptimization} as opposed to SGD. This was due to observing better convergence using the former. The Adam optimiser updates the network parameter in a similar albeit more complicated manner. The main difference is that SGD updates the parameters $\mathcal{W}$ based on the current gradient of the cost function as seen in equation \eqref{eq:SGDUpdate}. Adam, on the other hand, adjusts each parameter's learning rate individually. This is done based on estimates of both the first moment (mean) and the second moment (uncentered variance) of the gradients. This gives the Adam optimiser the ability to handle sparse gradients more effectively and reduces the probability of getting stuck in a local minima.
\newline
If we denote by $g_i := \partial_{\mathcal{W}_i}\expect{\hat{O}}$, then the first moment is computed as
\begin{equation}
    m_{i} = \beta_1 m_{i-1} + (1 - \beta_1) g_i,
\end{equation}
where $\beta_1$ is the decay rate of the first moment. The second moment can be then computed as
\begin{equation}
    v_{i} = \beta_2 v_{i-1} + (1 - \beta_2) g_i^2,
\end{equation}
where $\beta_2$ is the decay rate of the second moment. The bias correction for both moments can then be evaluated as
\begin{equation}
    \Tilde{m}_i = \frac{m_i}{1 - \beta_1^i} \qquad , \qquad \Tilde{v}_i = \frac{v_i}{1 - \beta_2^i}.
\end{equation}
All together, the parameter update rule according to the Adam optimiser, for a small parameter $\epsilon$ needed to avoid division by zero, takes the expression
\begin{equation}
    \mathcal{W}_{p + 1} = \mathcal{W}_p - \frac{\eta_p}{\sqrt{\Tilde{v}_p} + \epsilon} \Tilde{m}_p,
\end{equation}
which clearly has a richer structure when compared to the update rule of SGD in \eqref{eq:SGDUpdate}. Note that the learning rate $\eta_p$ has a dependence on the iteration $p$ here since in practice, one can provide a number of learning rates using a schedule function to the software used in this work.

\subsection{Comparing different samplers with different $\hat{R}$ values}
\label{app:B}
\subsubsection{General overview of sampler types}
The MCMC process requires a sampler to draw samples from the target distribution. The standard type samplers one would usually use are Metropolis-Hastings type samplers \cite{Chib1995}. Such samplers would sample proposals from a proposal distribution, evaluate the acceptance probability for the given proposals, and update the current state based on the acceptance probability. This iterative process is the general outline for exploring the target distribution and generating representative samples. 
\newline
On the other hand, one also have \textit{exact} samplers (exact inference methods) \cite{Brémaud2017}. Such samplers are used to obtain samples exactly from a target distribution. That is, unlike Metropolis samplers, exact samplers do not provide approximate samples, and they guarantee that the generated samples follow the true distribution. This can be done, for example, by enumeration where for a small enough discrete probability distribution, one can calculate the exact probabilities of each state. 
\newline
Ultimately, a sampler of some type is used to provide the chains. Thus, chains in an MCMC process are essentially sequences of the iteratively generated samples where each sample depends on the one preceding it. To indicate whether these chains are mixing well and are converging, the Gelman-Rubin statistic (denoted by $\hat{R}$) is used \cite{doi:10.1080/10618600.1998.10474787}. It compares the variance within each chain, to the variance of the means across chains. When the chains have converged to a stable distribution, then $\hat{R}$ approaches 1. In Metropolis type sampling, multiple chains are run in parallel, thus one has a $\hat{R}$ value for the simulation. However, exact sampling only involves one chain, thus $\hat{R}$ is not available in this case. Furthermore, because exact sampling does not have multiple chains, they do not suffer from autocorrelation issues where the sucessive samples in a chain are not independent, but rather correlated. This type of correlation can lead to inefficient exploration of the probability space and slow convergence. 

\subsubsection{The issue with Metropolis type samplers}
In our work, several samplers have been used. First, a \textit{local} Metropolis sampler where, given a basis state represented by charge numbers on every site, it will change the value of one random site to a random charge value from the allowed charges. Second, a \textit{2-local} Metropolis sampler, which acts like the local Metropolis sampler except it changes the charges for 2 random sites. Third, an \textit{exchange} Metropolis sampler, which exchanges the charges on two sites which are separated by a distance $d$. This type of sampler preserves the overall total charge of the state, as it only switches the values in the state around. Lastly, an exact sampler was also used.
\newline
The exact sampler acted as the benchmark which one could compare the results of the other samplers to. The exchange Metropolis sampler was immediately discarded, as it does not lead to effective exploration of the space due to it conserving the total charge. This property is not needed in our model. The local and 2-local Metropolis samplers gave similar accuracy, which were also close to the exact sampler. However, both suffered from a high $\hat{R}$ value. For the local sampler case, it is possible to find appropriate sampler parameters such that a low $\hat{R}$ value is obtained. However, this becomes more challenging as the charge cutoff is increased. Therefore, one would be inclined to dismiss these types of samplers, and only rely on the exact sampler. Unfortunately, exact samplers would only be available for small systems (e.g. a relatively small graph with a small charge cutoff) and thus one ultimately needs to use a Metropolis type sampler.

\subsubsection{Comparing the local Metropolis and the exact sampler}
In the following, we justify the use of the Metropolis type samplers despite the high $\hat{R}$ value. This is done by comparing the data across different samplers and identifying precisely why the Metropolis samplers fail and showing that this does \textit{not} affect the results obtained. The graph shown in Figure \ref{fig:graph} was used and a $\jmax = 2$ cutoff was imposed. The system was solved using the exact sampler (ES), a \quotes{good} (low $\hat{R}$ value) local Metropolis sampler (G-ML), a \quotes{bad} (high $\hat{R}$ value) local Metropolis sampler (B-ML) and a bad 2-local Metropolis sampler (B-MTL). The following table shows the results of the samplers compared to exact diagonalisation as well as their $\hat{R}$ values.
\begin{table}[h]
    \centering
    \begin{tabular}{c|cccc}
    \rowcolor{lightergreen!20}
        Sampler & $\min\expect{\hat{C}}$ & Accuracy (\%) & $\hat{R}$ & $\hat{R}_{\Bar{100}}$ \\
        \hline
        ES & 0.632 ± 0.015 & 94.788 & - & -  \\
        G-ML & 0.674 ± 0.012  & 87.889 & 1.0035 & 1.014  \\
        B-ML & 0.621 ± 0.027 & 96.619 & 1.2247 & 1.225  \\
        B-MTL & 0.743 ± 0.054 & 76.399 & 1.2247 & 1.225  \\
        \hline
        \multicolumn{5}{c}{\cellcolor{lightergreen!20} Exact diagonalisation: $\min\expect{\hat{C}} \approx 0.601165$}
    \end{tabular}
    \caption{The $\min\expect{\hat{C}}$ for a $\jmax = 2$ cutoff and the graph described in Figure \ref{fig:graph} obtained via the NNQS ansatz using different types of samplers: exact sampler (ES), good (low $\hat{R})$ local Metropolis (G-ML), bad (high $\hat{R}$) local Metropolis (B-ML), and a bad 2-local Metropolis sampler (B-MTL). The last column on the right indicates the average $\hat{R}$ value over the last 100 iterations of the simulation.}
    \label{tab:sampler_comparison}
\end{table}
\newline
We see from the Table \ref{tab:sampler_comparison} that all the samplers have a relatively close accuracy. The difference between a good and a bad local Metropolis sampler run is the sampler configuration (e.g. number of chains and so forth). Since the space is small enough, and one has the amplitudes for all the basis states at hand after the minimisation process, one can indeed investigate whether all of these samplers lead us to the same solution. More precisely, are the contributing states in every solution obtained in every sampler run the same. We see that this indeed is the case as shown in Figure \ref{fig:sampler_comparison}.
\begin{figure}
    \centering
    
    \begin{subfigure}{0.48\textwidth}
        \includegraphics[width=1.1\linewidth]{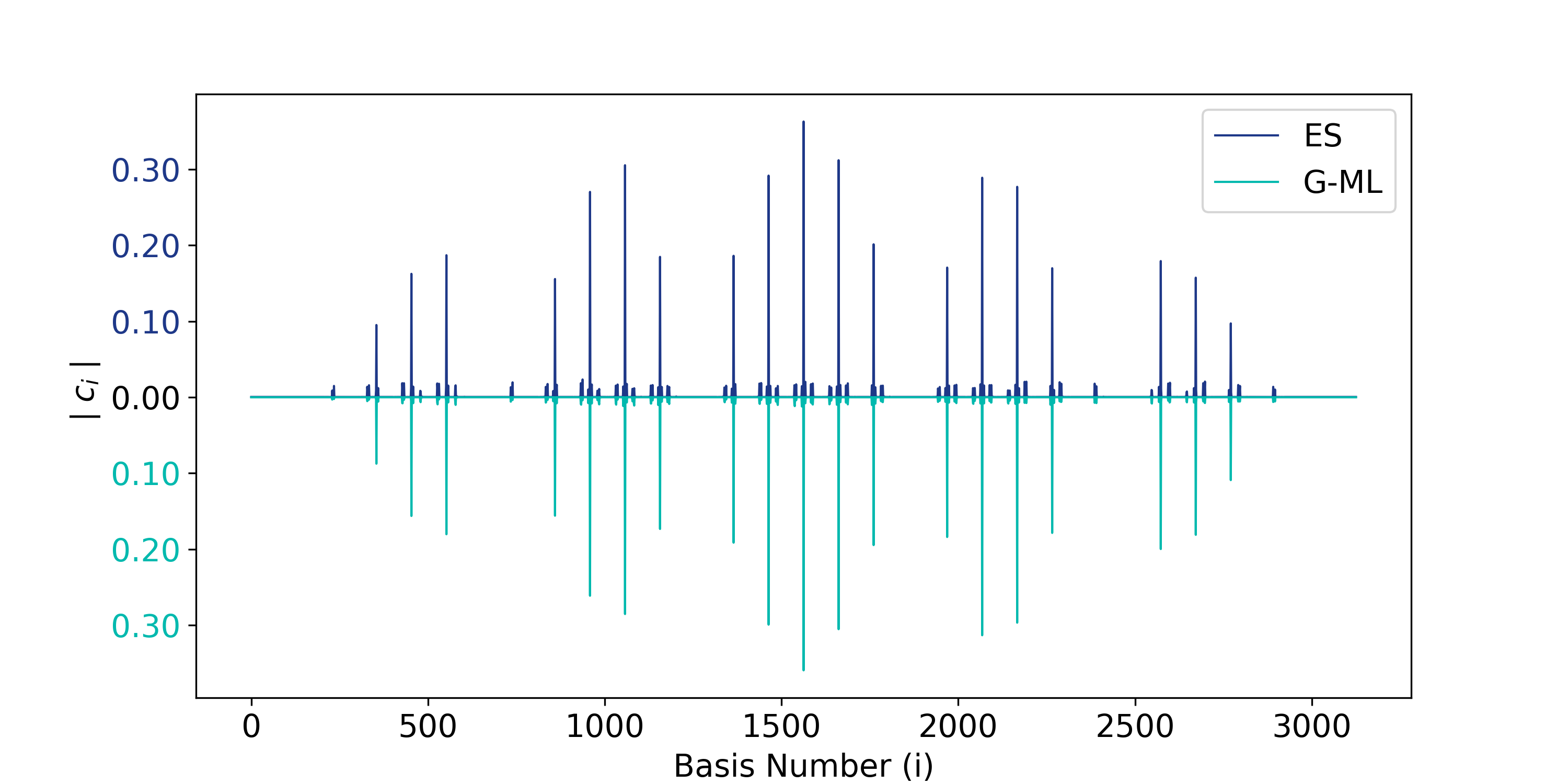}
        \caption{}
        \label{fig:sampler_comparison_sub1}
    \end{subfigure}
    \hfill
    \begin{subfigure}{0.48\textwidth}
        \includegraphics[width=1.1\linewidth]{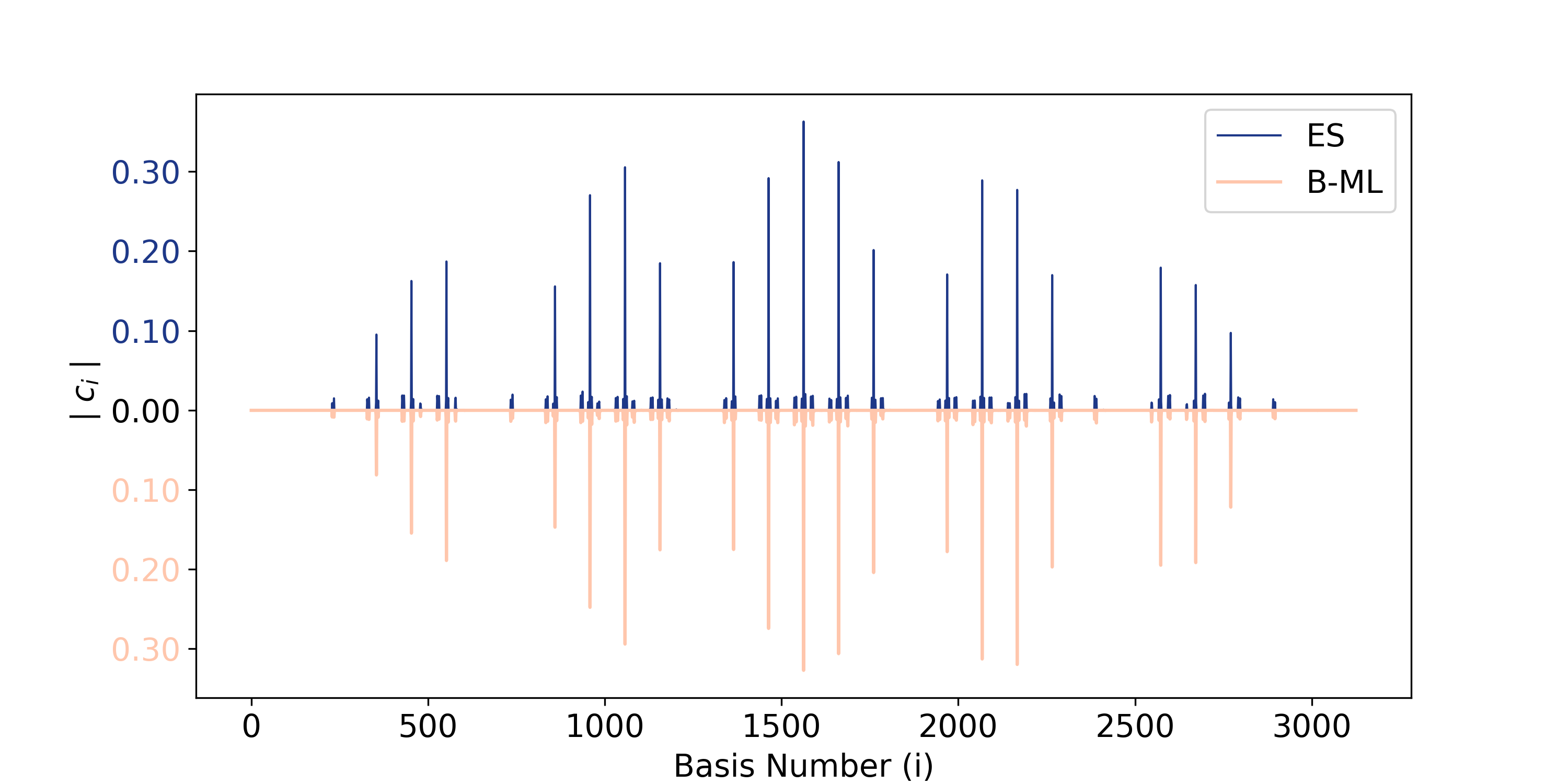}
        \caption{}
        \label{fig:sampler_comparison_sub2}
    \end{subfigure}
    
    \vspace{5px}
    
    \begin{subfigure}{0.48\textwidth}
        \includegraphics[width=1.1\linewidth]{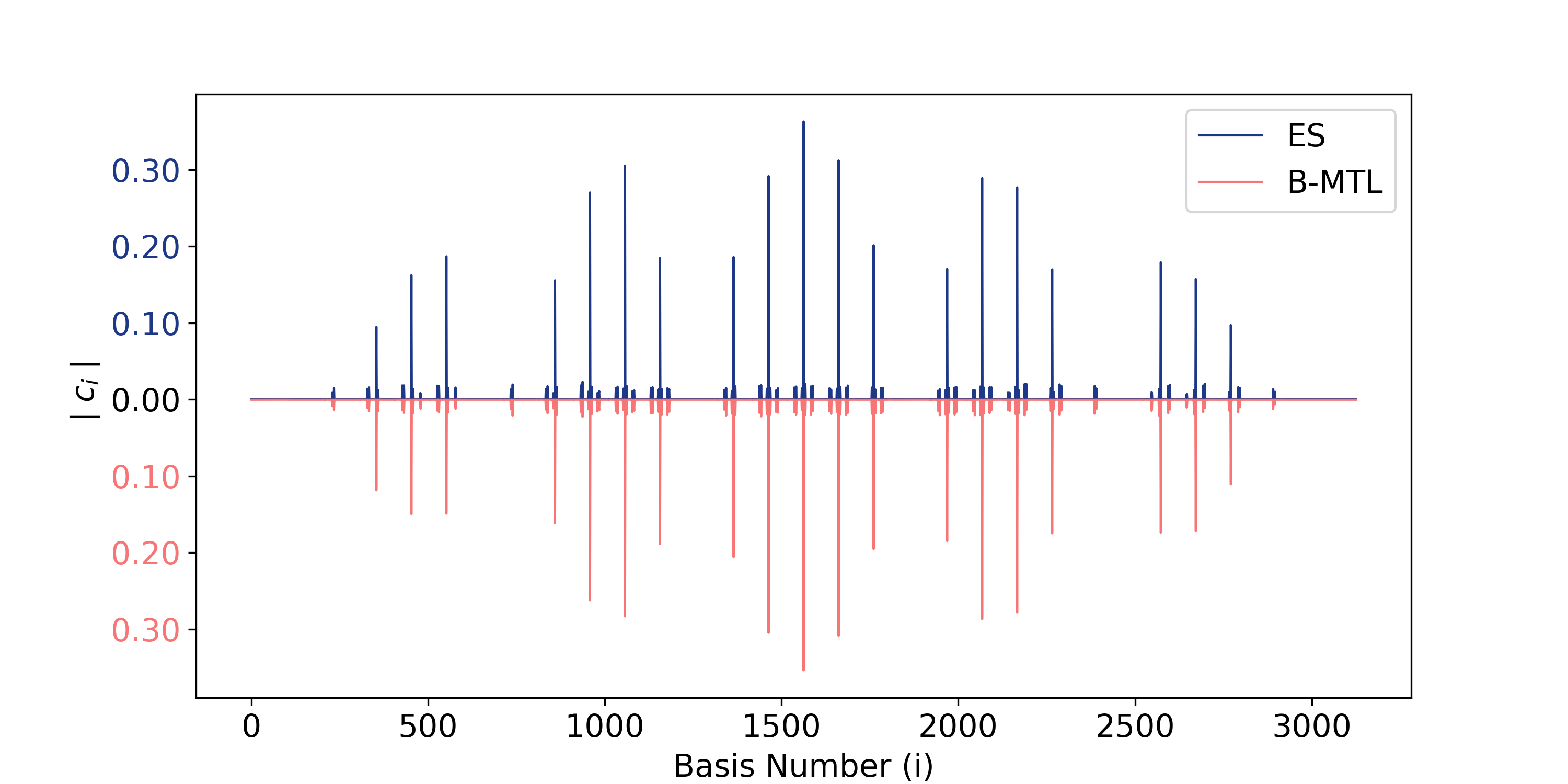}
        \caption{}
        \label{fig:sampler_comparison_sub3}
    \end{subfigure}
    
    \caption{The amplitudes of the solution in a $\jmax = 2$ simulation obtained using the NNQS ansatz with different samplers are shown. The exact sampler is compared with a Metropolis Local sampler with low and high $\hat{R}$ diagnostic as shown in Figure \ref{fig:sampler_comparison_sub1} and Figure \ref{fig:sampler_comparison_sub2} respectively and a Metropolis 2-local sampler with a high $\hat{R}$ diagnostic as shown in Figure \ref{fig:sampler_comparison_sub3}.}
    \label{fig:sampler_comparison}
\end{figure}
\newline
From Figure \ref{fig:sampler_comparison}, we see that despite the Metropolis type samplers failing and having a high $\hat{R}$ statistic, we arrive at the same solution as obtained by the exact sampler. This was furthermore numerically verified by computing the inner product of the states obtained where the \quotes{worst} case was $\langle \Psi^{(\mathrm{ES})} | \Psi^{(\mathrm{G-ML})} \rangle^2 = 0.98681$. These results are encouraging as one can be sure that despite the high $\hat{R}$ value, the state converges to the correct solution. The next task is to identify why the Metropolis type samplers fail, and why do they perform well under specific sampler parameters. 
\newline
One observes from the above figure that there are very specific, and not plenty, basis states that strongly contribute to the solution. The simulations are then performed for various $\jmax$ values, and the following table summarises the number of strongly contributing basis states in each case. 
\begin{table}[h]
    \centering
    \begin{tabular}{c|ccc}
    \rowcolor{lightergreen!20}
        $\jmax$ & $\dim\hilb_{\Tilde{\gamma}}$ & No. of contributing states & Percentage of the space (\%) \\
        \hline
        1 & 243 & 7 & 2.881 \\
        2 & 3125 & 19 & 0.608 \\
        3 & 16807 & 37 & 0.22  \\
        4 & 59049 & 61 & 0.103  \\
        5 & 161051 & 91 & 0.057 \\
        6 & 371293 & 127 & 0.034 \\
        7 & 759375 & 169 & 0.022 \\
        8 & 1419857 & 217 & 0.015 \\
    \end{tabular}
    \caption{The number of \textit{strongly} contributing basis states in the solution of different $\jmax$ cutoffs for the graph shown in Figure \ref{fig:graph}. The exact sampler was used for the MCMC process.}
    \label{tab:contributing_states_diff_samplers}
\end{table}
\newline
As shown in Table \ref{tab:contributing_states_diff_samplers}, the number of \textit{strongly} (See Section \ref{sec:contributingStates}) contributing basis states is relatively small to begin with, and gets smaller as $\jmax$ gets larger. This can also be seen as an example when visualising the amplitudes for the $\jmax = 5$ case as shown in Figure \ref{fig:spin_5_case}.
\begin{figure}[h]
    \centering
    \includegraphics[scale=0.45]{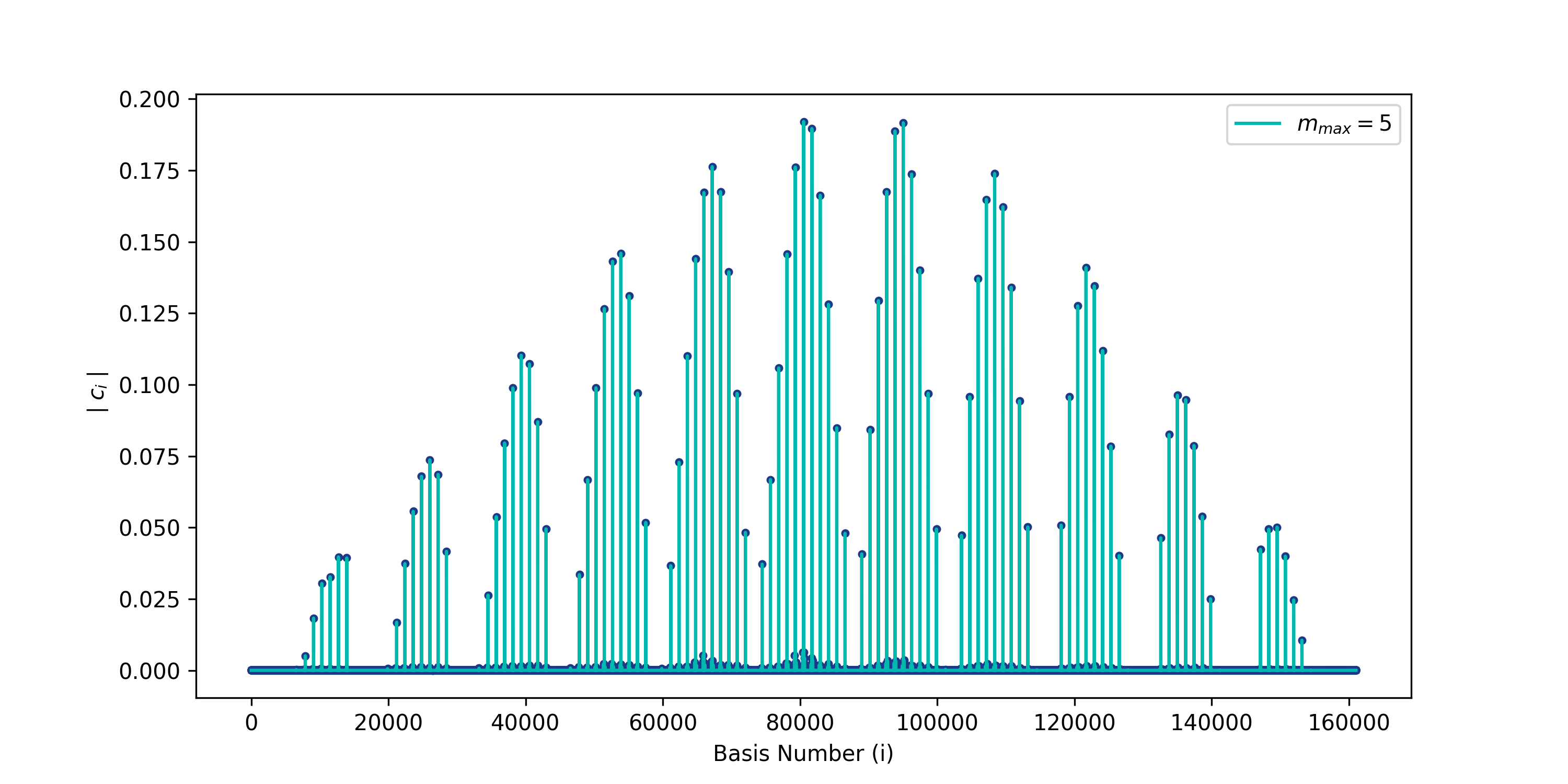}
    \caption{The amplitudes of all 161051 basis states in the $\jmax = 5$ cutoff for the graph shown in Figure \ref{fig:graph}. Out of them, only 91 are strongly contributing (0.057\% of the entire space). The mean of the amplitudes here is in the order of $10^{-5}$.}
    \label{fig:spin_5_case}
\end{figure}
\newline
What can be understood from these results is that as the charge cutoff gets higher, the number of strongly contributing states increases but not drastically especially in relation to the dimensions of the Hilebrt space. This in turn makes writing the solution in the computational basis a difficult task. Therefore, the Metropolis type samplers struggle to propose different proposals which will not be rejected by the MCMC. This is especially true when the number of chains in the sampler is set to a large value but with a relatively low number of sweeps per chain. The results obtained in the G-ML case were achieved by choosing a relatively low number of chains and a high sweeping rate. Therefore, the failure in the Metropolis type samplers, and thus the high $\hat{R}$, should \textit{not} be taken as an issue as this is inherit to the general type of problem at hand. This is further supported that even a good Metropolis sampler will start to have a bad $\hat{R}$ as $\jmax$ gets higher which is in accordance in the growing sparsity of the strongly contributing states for higher $\jmax$. While one can try to find good sampler configurations for different system sizes, ultimately, all samplers converge to the correct solution and the Metropolis type samplers can be used with confidence when the exact samplers cannot be implemented even if they have a large $\hat{R}$ value. This issue is not unique to this model. For example, in the Heisenberg model, the amplitudes in the solution are nicely spread out, while in the Bosonic matrix model, the case is as ours. Generally, it is observed that Metropolis type samplers will fail when the solution has an overlap with a very few number of basis states.

\subsection{Quantum states with real coefficients}
\label{app:C}
To increase numerical efficiency, it can be advantageous to do calculations using only real coefficients. Many calculations in this work are done in this way. Of course, that means that one a priori then explores a small part of the full Hilbert space of states. In this appendix, we will show that under certain circumstances, this subspace will have non-zero intersection with each eiegnspace of a given operator. In particular, in such a situation there would be a ground state with only real coefficients.
\newline
Let $\hilb$ be a complex Hilbert space and ${\{b_i\}}$ an orthonormal basis. We define complex conjugation with respect to this basis, i.e., for $\psi\in \hilb$ 
\begin{equation}
    \overline{\psi}:= \sum_i \, \scpr{\psi}{b_i}\, b_i. 
\end{equation}
We can then also define $\re \psi := (\psi +\overline{\psi})/2$ etc. in the obvious way. 
\begin{lemmaApp}
Let $A$ be an operator on $\hilb$ such that its matrix elements 
\begin{equation}
\label{eq:realmatrix}
    A_{jk}:= \scpr{b_j}{A\,b_k} 
\end{equation}
are real. Then each eigenvector is a (complex) linear combination of eigenvectors to the same eigenvalue with real coefficients in the basis $\{b_i\}$. In particular, for each eigenvalue $\lambda$ of $A$, there is an eigenvector $\psi_\lambda$ with real coefficients in the basis ${\{b_i\}}$, i.e. 
\begin{equation}
    \re \psi_\lambda = \psi_\lambda. 
\end{equation}
\end{lemmaApp}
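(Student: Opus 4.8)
The key structural fact is that real matrix elements make $A$ commute with the conjugation map $\psi\mapsto\overline{\psi}$. The plan is to establish this invariance first and then read off the conclusion by decomposing an eigenvector into its real and imaginary parts with respect to the basis $\{b_i\}$.

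First I would verify that $\overline{A\psi}=A\overline{\psi}$ for every $\psi\in\hilb$. Writing $\psi=\sum_k c_k b_k$ with $c_k=\scpr{b_k}{\psi}$, the $j$-th coefficient of $A\psi$ is $\scpr{b_j}{A\psi}=\sum_k A_{jk}c_k$, so the $j$-th coefficient of $\overline{A\psi}$ is $\sum_k \overline{A_{jk}}\,\overline{c_k}=\sum_k A_{jk}\overline{c_k}$, using that each $A_{jk}$ is real. This is exactly the $j$-th coefficient of $A\overline{\psi}$, so the two vectors agree. Equivalently, $A$ maps the real subspace $\hilb_{\R}:=\{\psi:\overline{\psi}=\psi\}$ into itself.

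Now let $A\psi=\lambda\psi$ with $\psi\neq0$. Applying conjugation and using the commutation just shown gives $A\overline{\psi}=\overline{\lambda}\,\overline{\psi}$. When $\lambda$ is real — which is the situation of interest here, since the master constraint $\hat{C}$ is self-adjoint and positive and hence has real spectrum — this says $\overline{\psi}$ lies in the same eigenspace $E_\lambda=\ker(A-\lambda\one)$ as $\psi$. Consequently $\re\psi=(\psi+\overline{\psi})/2$ and $\im\psi=(\psi-\overline{\psi})/(2i)$ both lie in $E_\lambda$, and by construction both are conjugation-invariant, i.e. have real coefficients in $\{b_i\}$. Since $\psi=\re\psi+i\,\im\psi$, the eigenvector is a complex linear combination of eigenvectors to the same eigenvalue that have real coefficients; and because $\psi\neq0$, at least one of $\re\psi,\im\psi$ is nonzero, providing the asserted real eigenvector $\psi_\lambda$ with $\re\psi_\lambda=\psi_\lambda$.

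I expect the only genuine subtlety to be the reality of the eigenvalue. For a general operator with real matrix elements, conjugation carries $E_\lambda$ to $E_{\overline{\lambda}}$, so the eigenspace is conjugation-invariant precisely when $\lambda\in\R$; for genuinely complex $\lambda$ the real and imaginary parts are not eigenvectors and no real eigenvector exists. The clean way to phrase the argument is therefore that $\hilb_{\R}$ is a real form of $\hilb$ left invariant by $A$, and for real $\lambda$ the intersection $E_\lambda\cap\hilb_{\R}$ is a real form of $E_\lambda$, so it is nonzero and spans $E_\lambda$ over $\mathbb{C}$. In the present application reality is automatic from self-adjointness, so this causes no difficulty and in particular yields a ground state with real coefficients.
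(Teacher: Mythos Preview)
Your argument is correct and follows essentially the same route as the paper: show that conjugation commutes with $A$, conclude that $\overline{\psi_\lambda}$ is again an eigenvector to $\lambda$, and hence so are $\re\psi_\lambda$ and $\im\psi_\lambda$, at least one of which must be nonzero. The paper's proof simply asserts $\lambda\in\R$ at the outset without comment; your explicit discussion of why the reality of $\lambda$ is needed (and why it holds in the intended application by self-adjointness of $\hat C$) is a genuine clarification of a point the paper glosses over.
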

\begin{proof}
Let $A$ be as stated, with non-zero eigenvector $\psi_\lambda$ to eigenvalue $\lambda\in\R$. Because of \eqref{eq:realmatrix}, 
\begin{equation}
    A\,\overline{\psi_\lambda} = \overline{A\,\psi_\lambda}= \lambda \overline{\psi_\lambda},
\end{equation}
so $\overline{\psi_\lambda}$ is an eigenvector to eigenvalue $\lambda$ as well. Consequently so are $\re \psi_\lambda$ and $\im \psi_\lambda$. This demonstrates the first part of the lemma. If one of them is nonzero, we have found the desired eigenvector with real coefficients. But one of them must be non-zero, or else $\psi_\lambda=0$ in contradiction to the assumption. 
\end{proof}

\newpage

\subsection{Technical notes on the network architecture and the computational resources}
\label{app:D}

\subsubsection{The network architecture}

The general network architecture used in this work and described in Figure \ref{fig:ourArchitecture} is detailed upon here. The following figure details the architecture in the $\jmax = 2$ case.
\vspace*{10px}
\begin{wrapfigure}{r}{5.5cm}
\includegraphics[width=5.5cm]{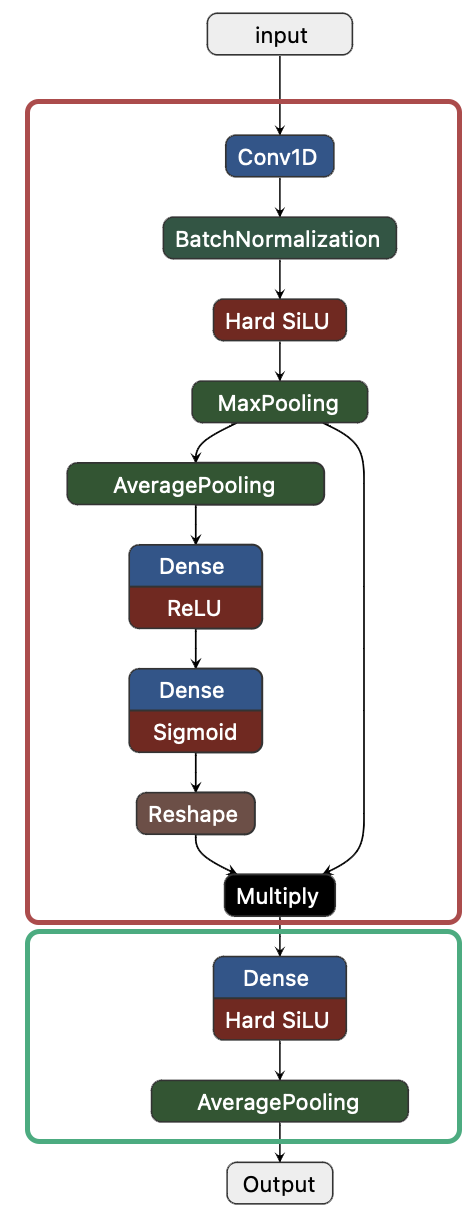}
\caption{The architecture of the neural network used in this work for the specific cutoff of $\jmax = 2$.}
\label{fig:explicitArchitecture}
\end{wrapfigure} 
\noindent
In the figure, the learning block is bounded by the red square while the evaluation block at the bottom is bounded by a green one (see Section \ref{sec:networkArchitecture}). In this specific case of $\jmax = 2$, there is only one convolutional sub-block which includes all the layers shown in the learning block. This sub-block composed of two parts: a typical convolutional network followed by a form of attention layer, specifically a squeeze-and-excitation (SE) block. In the former, the input is passed into a convolutional layer composed of a 1-D CNN with a kernel size of, for example, (1,) and 60 features. Next, a batch normalisation layer follows with a momentum (decay rate for the exponential moving average of the batch statistics) of 0.85 and a small value $\epsilon$ added to the variance to avoid dividing any values by zero. The following layer is an activation layer which utilises the Hard SiLU function
\begin{equation}
f(x) = \begin{cases}
            x \cdot \sigma(x), & \text{if } x > \text{threshold} \\
            0, & \text{if } x \leq \text{threshold}
        \end{cases}    
\end{equation}
where the sigmoid activation $\sigma(x)$ is 
\begin{equation}
    \sigma(x) = 1/(1 + e^{-x}).
\end{equation}
Lastly, a pooling (max. pooling) layer follows with a window shape of, for example, (1,). The next few layers and operations are the SE block. The use of an SE block is employed to enable the network to emphasise and focus on important features extracted by the previous CNN. Namely, the input is first pooled using average pooling after which it passes through a dense layer with some reduction ratio, in this case 2. That is, if the number of inputs are 60 then this dense layer will have only 30 output nodes. Following the activation by a ReLU function, which is defined as $f(x) = \max(x, 0)$, another dense layer follows. This time, it is comprised of as many nodes as there were previously input nodes prior to the reduction (e.g. 60). The output is then reshaped and multiplied by the output of the max. pooling layer. All of these layers mentioned so far comprise one convolutional sub-block. In the case of higher $\jmax$, the number of such sub-blocks, as well as the number of features in the convolutional layers and other variables changes to accommodate. The last part of the network, shown in green, is just a simple feed-forward network comprised of a single dense layer, an activation layer and lastly an average pooling layer.
\newline
While the architecture may look complicated, one of the key features which enables it to be robust is the convolutional layers. This is due to the fact that it has spatially local structures whereby the convolutional layer sweeps through the input using a kernel of a specified size. If a larger kernel is used, this enables the learning of broad features. On the other hand, a smaller sized kernel enables the learning of localised features. In the case at hand, that would heuristically mean that one relates the i\textsuperscript{th} quantum number in the \textit{input array} with the $(i+1)$\textsuperscript{th} quantum number. 
\newline
Note that the choice of such a small kernel size of (1,) for the convolutional layer mentioned earlier is intentional and in-effect does not take into account any such spatial correlations. This makes it equivalent to a dense layer with, in this case, 60 features. While in this work this is the case, the kernel size can be changed for larger spaces as needed and therefore the convolutional layer is the one used irrespectively. Further, the window shape of (1,) in any pooling layer in-effect leaves the input unchanged and does not aggregate any values. However, when increasing the kernel size of the convolution layer, this can also be changed and therefore the layer is left as it is in the general architecture, despite it not being of use in some cases, for generality as it does not add any number of parameters to the network.
\newline
The network accepts as input an array of dimensions \texttt{(batchdim, N)} where \texttt{batchdim} is the number of batches of basis states of the Hilbert space and \texttt{N} = $|E(\gamma)|$. In the most simplistic picture, each node in the input layer can be understood as the quantum number (representation label) of edges of the graph. Therefore, the general network architecture is agnostic to the graph used. However, for larger graphs a scaling similar to the one employed for large Hilbert spaces needs to be used. We do note that at the present time of writing, there is no ``automated" manner in which this scaling can take effect for \textit{any} graph and/or $\dim\hilb$. Further, for high $\jmax$, it was observed that implementing skip connections between different convolutional sub-blocks may aid with convergence. This work did not focus on implementing the most efficient architecture however this will be explored in upcoming work. 

\subsubsection{Computational resources}
This work utilised two computing devices to conduct the simulations. First, a high-performance computing (HPC) facility and second, a device running the Apple M1 (2020) processor. The following table summarises the technical details of both devices.
\newline
\begin{table}[h]
\small
    \centering
    \begin{tabular}{c|ccccccc}
    \rowcolor{lightergreen!20}
        Facility & Processor & Nodes & Cores & Base Freq. & Max. Freq. & GPU & MPI \\
         \hline
         HPC & Intel Xeon E3-1240 v6 & 1 & 4 & 3.70 GHz & 4.10 GHz & No & No \\
         Apple & Apple M1 (2020) & 1 & 8 & 3.20 GHz & N/A & iGPU & No \\
    \end{tabular}
    \caption{The technical specifications of the computational resources utilised in conducting the simulations done in this work.}
    \label{tab:my_label}
\end{table}

\noindent On both devices, no active parallelisation (MPI) has been utilised. That is, explicit distribution of the Markov chains or samples across different nodes or processors has not been done. Further, the HPC computations did not utilise any GPU. While the Apple M1 chip features an integrated GPU (iGPU), the softwares used in this work rely on, for the bulk of the computations, the jax \cite{jax2018github} package which is an XLA (Accelerated Linear Algebra \cite{50530}) compiler that is incompatible with the iGPU of the M1 chip. Therefore, no GPU has been used in the case of the M1 computations as well. 
\newline
The approach followed for the comparison of the computational cost in this work is an optimistic one. That is, we compare the best performance for exact diagonalisation (ED) computational costs against that of the neural network quantum state (NQS). The ED computations were superior on the HPC. This is attributed to several reasons such as the fact that the Xeon processor supports AVX2 (Advanced Vector Extensions) instructions optimised for high-performance numerical computations, optimisation for high double-precision (FP64) performance and hyper-threading technology to name a few. On the other hand, NQS simulations were superior on the M1 chip due to them being conducted on 4 performance cores. However, there are other contributing factors such as an increased memory bandwidth (approximately 68 GB/s compared to approximately 38 GB/s) and the Apple neural engine tasked to accelerate machine learning tasks. 
\newline
\begin{figure}[h]
    \centering
    \includegraphics[scale=0.55]{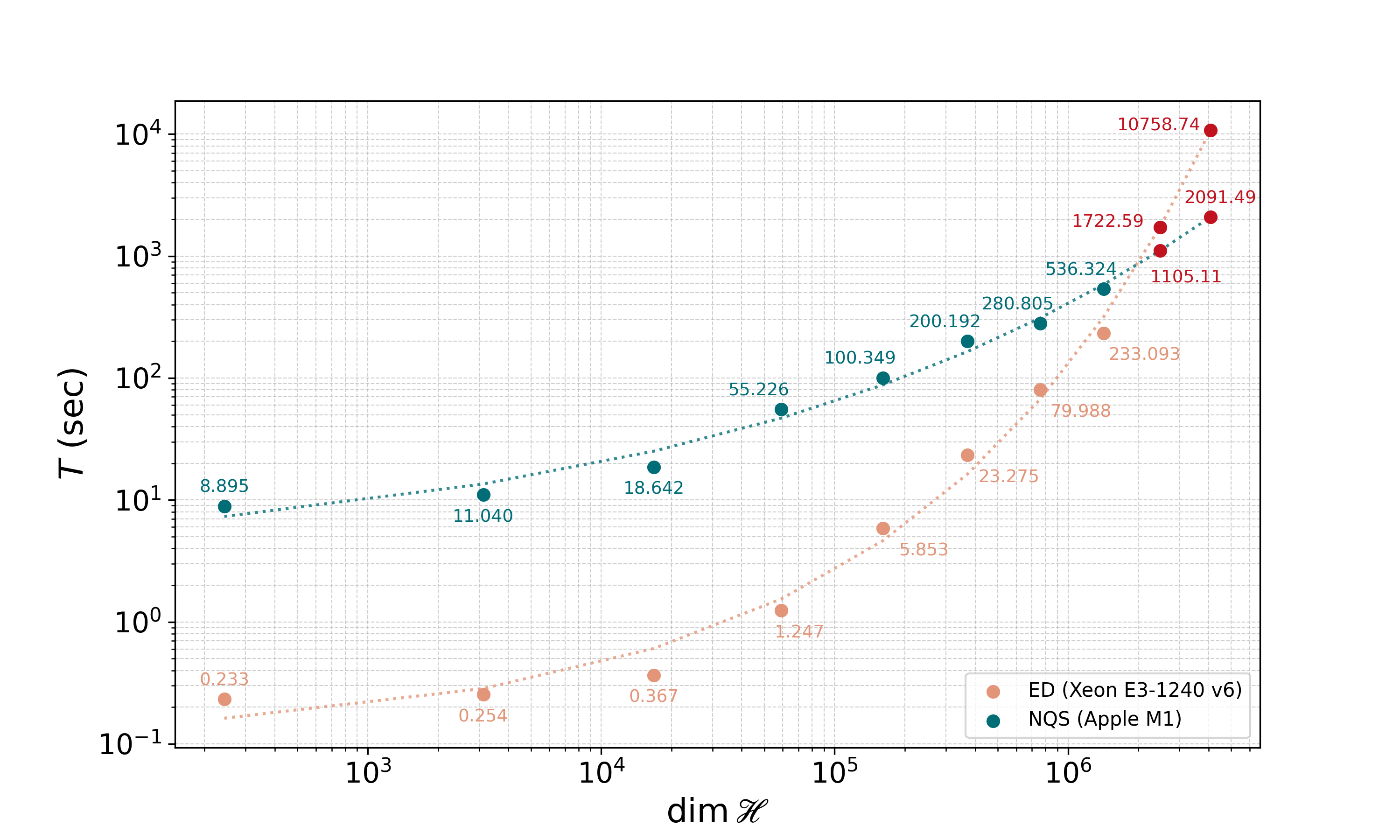}
    \caption{A log-log plot of the average runtime $T$ for ED and NQS computations (averaged over 10 runs) with respect to the dimensions of the Hilbert space. Note that the data points in red are extrapolated points. The data is fitted with the dotted line for each trend respectively.}
    \label{fig:computeTime}
\end{figure}

\noindent Figure \ref{fig:computeTime} shows the average runtime for conducting the ED and the NQS computations. Each data point represents the average of 10 simulations. The red data points in each trend are extrapolated from the fit function which is drawn as a dotted line for each dataset. The data obtained is fitted with the function
\begin{equation}
    T(\dim\hilb) = \gamma \exp(\alpha (\dim\hilb)^{\beta}).
\end{equation}
In order to compare the best runtime for each method, the ED computations were done on the HPC while the NQS computations were conducted on the M1. In this work, the ED method used was the the (iterative) implicitly restarted Laczos algorithm \cite{Lanczos:1950zz}. Due to the number of variables at play (e.g. iteration depth), the expected growth of the ED compute time compared to the dimensions of the Hilbert space was not the focus of this analysis. Rather, we merely compare the required compute time for both methods. As can be seen in the figure, the ED computations (ranging from $\approx$ 0.2 seconds for $\jmax = 1$ to $\approx$ 233 seconds for $\jmax = 8$) far exceed their NQS counterparts (ranging from $\approx$ 8.8 seconds for $\jmax = 1$ to $\approx$ 536 seconds for $\jmax = 8$) in terms of compute time in low dimensional Hilbert spaces. However, for higher $\dim\hilb$, the gap in terms of efficiency starts to narrow. In fact, extrapolated data seem to indicate that the NQS approach excels in high $\dim\hilb$ cases as opposed to ED. Further, ED quickly become limited by the amount of required RAM to store the sparse (or dense) matrix. As is shown in later work \cite{Sahlmann:2024kat}, this is a hard limit which renders ED methods unusable while the NQS approach does not suffer from such an issue. 
\newline
Lastly, the compute time of the NQS approach is strongly affected by the topology of the network. Different architectures were implemented in this work with varying degrees of efficiency. However, this work did not focus on finding the most efficient architecture and thus, this is purposely left out. The door for reducing the compute time remains open and is to be explored in future work. Further, explicitly utilising parallelisation as well as GPUs is expected to reduce the compute time of the NQS approach drastically. 
\end{document}